\newcolumntype{d}[1]{D..{#1}}
\newtheorem{theorem}{Theorem}
\newtheorem{assumption}{Assumption}
\newtheorem{definition}{Definition}
\newtheorem{proposition}{Proposition}
\newtheorem{remark}{Remark}
\newtheorem{lemma}{Lemma}
\newcommand\independent{\protect\mathpalette{\protect\independenT}{\perp}}
\def\independenT#1#2{\mathrel{\rlap{$#1#2$}\mkern2mu{#1#2}}}
\DeclareMathOperator*{\argmin}{arg\,min}
\title{When is $p$-hacking detectable?}
\author{ Stefan Faridani\footnote{Georgia Institute of Technology. sfaridani6@gatech.edu. I thank Graham Elliott, Kaspar W\"uthrich, and the participants of the Georgia Tech Research Brownbag, BITSS annual meeting, and Georgia Econometrics Workshop for helpful comments. I also thank Xianfang Xiong for excellent research assistance.}}
\begin{document}

\maketitle 
\onehalfspacing
\begin{abstract}
We show that some forms of $p$-hacking cannot be detected by examining the histogram of $t$-statistics or their $p$-values. Even when $p$-hacking is detectable, standard tests may lack power. We propose a novel test that detects {\it every} form of selective reporting that is detectable from the distribution of reported $t$-statistics. Our test statistic is the distance between the smoothed empirical $t$-curve and the set of possible honest distributions. This {\it projection test}  is sharp and can only be evaded by selective reporting that also evades all other valid tests of restrictions on the $t$-curve. We also show how to avoid spurious rejections caused by some benign distortions in the $t$-curve. Applying the test to the \cite{bb} meta-dataset, we find that the $t$-curves for RCTs and IVs are more distorted than could arise by chance, (de)rounding, or the Student-$t$ approximation.
\end{abstract}

\vspace{0.1in}
\noindent\textbf{Keywords:} $p$-hacking, Deconvolution, Hilbert space

\vspace{0.1in}
\noindent\textbf{JEL Codes:} C12, C14

\pagenumbering{arabic}

\onehalfspacing

\clearpage
\section{Introduction}

When researchers selectively report their most favorable results, can this be detected? A large literature examines histograms of reported $t$-statistics for telltale signs of $p$-hacking such as bunching.\footnote{See for example: \cite{Simonsohn,Head,Brodeur,havranek_housing,bb,Havranek24}} Several recent tests formalize this visual intuition.\footnote{For example: \cite{Elliott,elliott2024powertestsdetectingphacking,kudrinjmp24,Kudrinjmp}} But a more fundamental question remains: is the distribution of reported $t$-statistics enough to reveal selective reporting at all? We show that the answer is ``not always." Some forms of $p$-hacking change the distribution of reported results yet leave behind a $t$-curve that could have arisen honestly. We then propose a test that is sharp, i.e. it detects every form of selective reporting that any valid test based on the $t$-curve could detect in a large meta-sample. We demonstrate in simulations and in an empirical application that our test uncovers selective reporting where standard tests do not.

This paper's first contribution is to show that undetectable $p$-hacking is possible. When the distribution of true effects is sufficiently smooth, it can mask selective reporting. One simple example is a researcher who reports the maximum of two $t$-statistics and the distribution of true effects is normal. Then the $t$-curve is identical to one where a researcher reports honestly and whose distribution of true effects is the maximum of two normals. In this case selective reporting is undetectable because the hacked $p$-curve has an ``alibi." We also provide examples of $p$-hacking that are always detectable. 

Our second contribution is methodological. We propose a new test that detects \emph{every} form of selective reporting that can be detected by examining the $t$-statistics (or their $p$-values). The test statistic is the distance between the smoothed empirical $t$-curve and the set of all possible honest $t$-curves. We reject when the empirical distribution lies too far from this set to be attributed to sampling error.  No assumptions at all are imposed on the distribution of true effects.  We show that the test controls asymptotic size and that it will be consistent against any fixed detectable alternative given tight enough values of the smoothing parameters. 

We also show how to distinguish selective reporting from some benign forces that can slightly distort the $t$-curve. For example, $t$-statistics in practice are reported with rounding error and they might only have an approximately normal distribution. This creates small imperfections in the  $t$-curve that our projection test can pick up and misattribute to selective reporting. To solve this, we introduce a breakdown statistic: the smallest departure from normality of the $t$-statistic needed to overturn a rejection. When this exceeds the distortion attributable to the Student-$t$ approximation or de-rounding, the rejection can be attributed to selective reporting.

We apply our test to address a puzzle in the empirical literature. \cite{bb} foumd that $t$-curves for experimental papers show much less visual evidence of selective reporting than other empirical methods like difference in differences. The tests of \cite{Elliott,elliott2024powertestsdetectingphacking} also found little evidence of $p$-hacking in their sample. But \cite{simonsohn2020phacking} responded by hypothesizing that the RCT data structure might incentivize a different kind of  $p$-hacking that leaves behind a visually innocuous $t$-curve. Our projection test finds that the RCT $t$-curve does in fact contain distortions too severe to be explained by chance, de-rounding, or the Student-$t$ approximation. This new evidence supports Simonsohn's conjecture that RCTs might contain meaningful selective reporting that is simply harder to see on the histogram. 

The main technical challenge in developing our test is to write the null hypothesis in a feasibly testable form. Using the singular value decomposition of \citet{CarrascoPaper}, the  properties of Hilbert Spaces, and spectral cutoff regularization, we show that the null hypothesis is equivalent to checking whether a vector of moments of the data belong to a known convex set.\footnote{This regularization method is related to the theory of deconvolution. See for example: \cite{Fan91,on_Gibbs,Johannes,Hohage,faridani2025testingunderpoweredliteratures}}   Mathematically, this is exactly the testing problem studied by \citet{FangSantos} and we use their bootstrap to obtain critical values.  We therefore call our test the {\it projection test}.

The broad contribution of this paper is to show that a clean visual appearance of the $t$-curve can be deceptive. Some $p$-hacking is undetectable and other kinds reveal themselves without the usual signatures. Meta-analysts should add our projection test to their roster and become more cautious about interpreting a smooth $t$-curve as evidence of honesty.

This paper is organized as follows. Section \ref{sec:setup} shows that some forms of $p$-hacking are undetectable. Section \ref{sec:the_test} proposes the projection test. Section \ref{sec:misspecification} extends to situations where the $t$-statistic is only approximately normal. Section \ref{sec:sims} presents simulations and Section \ref{sec:application} provides an empirical application. Proofs are in Appendix \ref{sec:proofs}.

\section{Detectable and Undetectable $p$-hacking}\label{sec:setup}

First define several pieces of notation. Let $\varphi(z)$ denote the probability density function (PDF) of the standard normal distribution. Adding a subscript $\varphi_{\sigma^2}(z)$ denotes the density of the normal with variance $\sigma^2$. No subscript means that the variance is unity.  Define $\mathcal{P}$ as the set of all probability distributions for a random variable in $\mathbb{R}$. Define $\mathcal{D}\subset \mathcal{P}$ as the set of all discrete distributions.

\subsection{Meta-Analyst's Null Hypothesis}

Our first goal is to formally state our null hypothesis. This requires characterizing what the $t$-curve looks like if there is no selective reporting. Consider a population of $t$-tests. Each researcher first draws the unobserved latent true effect $H\in\mathbb{R}$ from an unknown distribution $\Pi_0 \in \mathcal{P}$. We call $\Pi_0$ the {\it distribution of true effects}. The honest researcher then reports the $t$-statistic $T$, which is $H$ plus independent noise  $Z$.
\begin{equation}\label{eq:clean_tscore}
    T = H+Z  \qquad \text{where } H\independent Z\text{ and } Z\sim N(0,1)
\end{equation}

The probability density function (PDF) of $T$ is therefore the {\it convolution} of the distribution of $H$ with the standard normal density:
\begin{equation}\label{eq:clearn_tcurve}
    f_{T}(t)  =\mathbb{E}_{\Pi_0}\left[\varphi(t-H)\right]= \int_{-\infty}^\infty \varphi(t-h)  d\Pi_0(h)
\end{equation}

\begin{remark}\normalfont
   The exactly normal model in Equations (\ref{eq:clean_tscore}) and (\ref{eq:clearn_tcurve}) may seem restrictive because in practice $t$-statistics are only approximately normal. \cite{Elliott} use the same setup and we address its approximate nature in Section \ref{sec:misspecification}.
\end{remark}

Selective reporting means that the meta-analyst does not observe a random sample of $T$ because some $t$-statistics are not reported. Let $R$ be the event that $T$ is reported.  The meta-analyst instead observes draws from the conditional distribution of the reported $t$-statistics: $T \mid R$. Reporting is called {\it selective} if $R$ depends on the realization of $T$. For example, small values of $T$ may go unreported (e.g. by publication bias) or the researcher may repeatedly draw correlated $T$ with the same $H$ until they draw a large one and then stop. 

The form of selective reporting can be almost anything. The meta-analyst does not know what form it takes or the distribution of true effects and will not attempt to learn these. Instead, they wish only to determine whether the distribution of reported $t$-statistics can be explained without selective reporting. 

The meta-analyst's null hypothesis ${\mathbf{H}_0}$ states that there is some distribution of true effects $\Pi$ that explains the population $t$-curve without any selective reporting. We write this formally in Line (\ref{eq:h0}) below.
\begin{equation}\label{eq:h0}
     {\mathbf{H}_0:}\quad \exists \Pi \in \mathcal{P} \quad \text{ s.t. } \quad T \mid R  \stackrel{d}{=}  H'+Z\qquad H'\sim \Pi,\: Z\sim N(0,1) \qquad H'\independent Z 
\end{equation}

\noindent  We will see later on that $\mathbf{H}_0:$ is true if and only if  $T \mid R$ has a PDF $f_{T\mid R}$ that is sufficiently smooth.\footnote{To be precise, ${\mathbf{H}_0}$ is true when the deconvolved function $e^{\zeta^2/2}\phi_{T\mid R}(\zeta)$ is a valid characteristic function where $\phi_{T\mid R}$ is the characteristic function of the reported $t$-statistic. This can only occur when $\phi_{T\mid R}$ has thin tails, i.e. the PDF $f_{T\mid R}$ is smooth.} Therefore if $f_{T\mid R}$ has a discontinuity or oscillates rapidly, then it could not have arisen ``naturally" and ${\mathbf{H}_0}$ is violated. While some violations of ${\mathbf{H}_0}$ are apparent to the naked eye (e.g. jumps or kinks), others are not \citep{simonsohn2020phacking}.

When ${\mathbf{H}_0}$ is false, then something has distorted the $t$-curve. However, the converse does not hold. If ${\mathbf{H}_0}$ is true, then either there is no selective reporting, or selective reporting is {\it undetectable}. We formally define the term  ``undetectable" below. 

\begin{definition}
    We say that selective reporting is {\bf undetectable} if $T \stackrel{d}{\neq }T \mid R$ but ${\mathbf{H}_0}$ is true.  We say that selective reporting is {\bf detectable} if ${\mathbf{H}_0}$ is false 
\end{definition}

Undetectable forms of selective reporting change the distribution of $T$, but keep it smooth. Thus, undetectable $p$-hacking yields a $t$-curve that could have been produced ``naturally" without any selective reporting by some other distribution of true effects. We stress that it is impossible to design a valid test based on the $t$-curve or $p$-curve that has power greater than size against undetectable forms of selective reporting. 

To build intuition, we will discuss three simple examples of $p$-hacking  below. These cases were chosen because they illustrate three important lessons: 
\begin{enumerate}
    \item Some kinds of $p$-hacking are undetectable when the distribution of true effects is smooth.
    \item Other kinds of $p$-hacking are always detectable regardless of the true effects.
    \item  All kinds of $p$-hacking can be made undetectable when they are averaged with or ``laundered" by honest statistics with smooth distributions of true effects.
\end{enumerate}

\subsection{Example 1: Maximization $p$-hacking (Sometimes Undetectable)}\label{sec:max_hacking}

Some kinds of $p$-hacking are undetectable when the distribution of true effects is smooth. Consider a $p$-hacking process where a researcher first draws a single unobserved true effect $H\sim \Pi_0$. Then they observe two jointly normal $t$-statistics $\{T_1,T_2\}$ both with marginal distributions $N(H,1)$. The researcher reports the larger of the two $t$-statistics (where ``larger" means rightmost on the number line). This simple type of $p$-hacking distorts the rejection probability of the underlying hypothesis test, but produces no jumps in the $t$-curve. It is already known that existing tests tend to have low power to detect it \citep{elliott2024powertestsdetectingphacking, kudrinjmp24}.

We will now show that whether this $p$-hacking process is detectable at all depends on the distribution of $H$. To our knowledge this is the first example where $p$-hacking is proven to be undetectable. Consider $H\sim N(0,1)$. Then, the distribution of the reported $t$-statistic is the maximum of two (joint) normals plus a third independent normal:
\begin{equation}\label{eq:maxhack1}
   T \mid R \stackrel{d}{=}  \underbrace{Z_1}_{\text{true effect}}+\underbrace{\max\{Z_2,Z_3\}}_{\text{noise (max. hacking)}}
\end{equation}

But this $t$-curve can be explained without any selective reporting! If instead $H$ is distributed as the maximum of two (joint) normals and there is no $p$-hacking, we get exactly the same $t$-curve:
\begin{equation}\label{eq:maxhack2}
  T \mid R \stackrel{d}{=}   \underbrace{Z_1}_{\text{noise (no hacking)}}+\underbrace{\max\{Z_2,Z_3\}}_{\text{true effect}}
\end{equation}
\noindent Notice that the only difference between (\ref{eq:maxhack1}) and (\ref{eq:maxhack2}) is which component is called the true effect and which is called noise. This proves that maximization $p$-hacking can be undetectable because it can change the $t$-curve (and the size of the tests) but the resulting $t$-curve has an alibi---i.e. it could be explained without $p$-hacking. More generally, this logic suggests that many selection rules independent of $H$ will be undetectable when $H$ is a mixture of normals with variance at least one.

 The implication of this example for theory is that detectability depends jointly on the distribution of true effects and the form that $p$-hacking takes. The implication for practice is that even very simple forms of $p$-hacking can be totally undetectable by any examination of the $t$-curve or $p$-curve.

\subsection{Example 2: Threshold $p$-hacking (Always Detectable)}\label{sec:thresh_hacking}

Other kinds of $p$-hacking will be detectable regardless of $\Pi_0$. For example, suppose that the researcher $p$-hacks only if their initial result is insignificant. Specifically, the researcher draws two (potentially correlated) $t$-statistics $\{T_1,T_2\}$ with the same $H$. They report $T_1$ if $T_1>\text{cv}$ where cv is some critical value. Otherwise, they report $\max\{T_1,T_2\}$. We will refer to this as ``threshold $p$-hacking."

Threshold $p$-hacking produces a jump in the $t$-curve $f_{T\mid R}$ at the critical value and therefore many tests will detect it. In fact, the caliper test \cite{caliperOriginal,kudrinjmp24} is designed specifically to target such jumps. To see why threshold $p$-hacking is always detectable, consider the PDF of any ``honest" $t$-curve in Equation \ref{eq:clearn_tcurve}. Since the normal density $\varphi$ is everywhere continuously differentiable, so is the density $f_T$. Since any jumps in the $t$-curve are incompatible with {$\mathbf{H}_0$}, threshold $p$-hacking is always detectable.

\subsection{Example 3: ``Laundering Effect" of Sub-Population $p$-hacking }

All kinds of $p$-hacking can potentially be made undetectable when selection is done on just a subset of the data. Consider the following simple example. A researcher computes two independent $t$-statistics and reports their aggregate:
$$ T \mid R = \frac{T_1}{\sqrt{2}}+\frac{T_2}{\sqrt{2}}  $$
\noindent These two $t$-statistics could be, for example, from regressions on different sub-populations. 

Suppose that the first $t$-statistic was $p$-hacked in some unknown way and $T_1$ can have any distribution at all. But $T_2$ was computed honestly and has distribution $H+Z$ where $Z\sim N(0,1)$ and $H$ is the true effect. Suppose also that $H\sim N(0,1)$.\footnote{The example still works when $H=A+B$ where $A$ has any distribution and $B$ is an independent normal.} Then the full distribution of $T \mid R$ can be written as:
\begin{align*}
     T \mid R 
     \stackrel{d}{=}  \underbrace{\frac{T_1}{\sqrt{2}}}_{\text{$p$-hacked}}+\underbrace{\frac{H}{\sqrt{2}}}_{\text{true effect of $T_2$}}+\underbrace{\frac{Z}{\sqrt{2}}}_{\text{noise of $T_2$}} 
\end{align*}

But this very distribution can be achieved without $p$-hacking! Suppose instead that $T=H'+Z$ where $H'\stackrel{d}{=} \frac{T_1}{\sqrt{2}}$. Since $(H+Z)/\sqrt{2}\sim N(0,1)$ we have:
\begin{align*}
     T \mid R \stackrel{d}{=}  \underbrace{\frac{T_1}{\sqrt{2}}}_{\text{True effect}}+\underbrace{\frac{H+Z}{\sqrt{2}}}_{\text{noise  $N(0,1)$}} 
\end{align*}

This simple example demonstrates the principle that when a subsample is $p$-hacked,  aggregation with an honest subsample can ``launder" (i.e. mollify) the $p$-hacking and make it undetectable. Undetectability therefore does not necessarily require any specific $p$-hacking strategy.

\subsection{Visual Intuition for Detectability}

Intuitively, $p$-hacking is detectable when it removes enough smoothness from the $t$-curve to reveal itself. Figure \ref{fig:detectability} visualizes detectability in three examples. All three simulated $t$-curves were subject to severe $p$-hacking but the detectability of that hacking varies widely. The red solid line is the $t$-curve after threshold-type $p$-hacking (described above in Section \ref{sec:thresh_hacking}). This is easy to detect because of the jump in the curve caused by the threshold. The solid blue line is the $t$-curve after maximization-type $p$-hacking where the researcher reports the largest of eight independent $t$-statistics (a more extreme version of Section \ref{sec:max_hacking}) when $H=0$. The distortion is subtle, diffuse throughout the $t$-curve, and not visually apparent. Nevertheless $p$-hacking is still detectable here because $\Pi_0$ is not smooth. Finally, the dashed blue line is an identical DGP to the solid blue line except $H\sim N(0.5,1)$. Now the smoothness of the distribution of true effects fully masks the selective reporting and the $p$-hacking is undetectable despite its severity.
 
\begin{figure}
    \centering
    \includegraphics[width=0.7\linewidth]{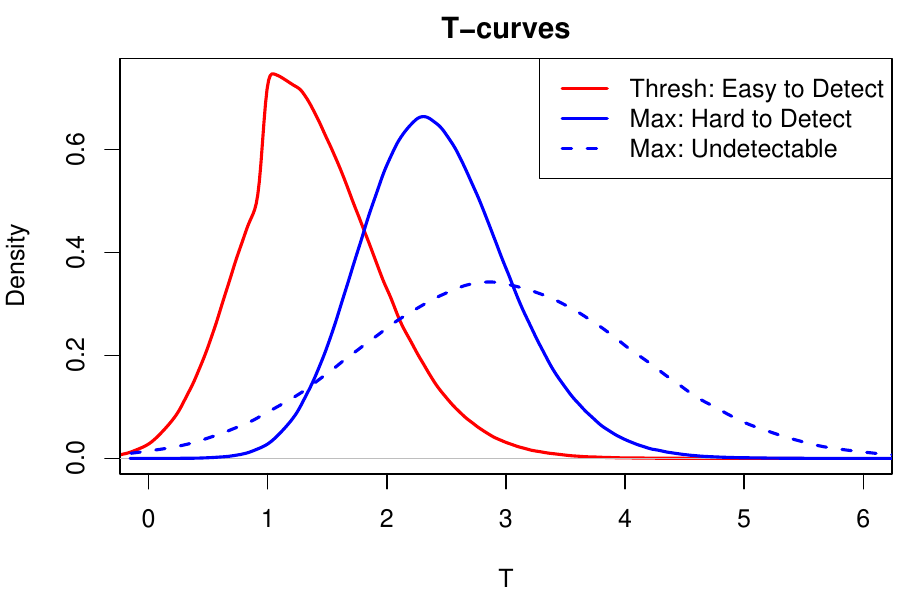}
    \caption{Three examples of severe $p$-hacking with varying detectability.}
    \label{fig:detectability}
\end{figure}

\section{The Projection Test}\label{sec:the_test}

In this section we propose the {\it Projection Test.} This test is ``sharp" in the sense that it exploits all available information in the $t$-curve. The intuition is simple: reject the null hypothesis if the observed $t$-curve is farther from the closest possible ``honest" $t$-curve than is likely to occur by chance. In other words, we conclude there is $p$-hacking if the $t$-curve lacks an ``alibi."

Formally, there are five main steps to developing the test. First, we define a Hilbert space that allows us to measure the distance between PDFs and to convert distributions to vectors. Second, we show that $\mathbf{H}_0$ is equivalent to testing whether $f_{T \mid R}$ belongs to a known convex set of PDFs. Third, we show that this is also equivalent to testing whether a vector of moments of the data belong to another known convex set in Euclidean space. In other words, the moments contain all of the information about $\mathbf{H}_0$. Fourth, we construct a regularized test statistic that is feasible to compute in small samples. Finally, we compute critical values using the bootstrap of \cite{FangSantos} and show validity and power.

\subsection{Preliminaries: The Hilbert Space $\mathcal{L}_T$}\label{sec:hilbert}

In order to test $\mathbf{H}_0$ we will need a way to convert PDFs into vectors. This is possible if we draw on the theory of Hilbert Spaces. Our main sources are Section 2 of \cite{CARRASCO2Handbook} and Example 1 of \cite{CarrascoPaper}.

First we define a notion of ``distance" between two PDFs. Define the norm  $||\cdot||_T$ that maps a real-valued function $\phi$ to a real number (or $\infty$) and is defined as:\begin{equation}
    ||\phi||_T \equiv \sqrt{\int_{-\infty}^\infty \phi(t)^2\varphi(t)dt} 
\end{equation}
\noindent  We have chosen this norm to measure the distance between two PDFs because it can be converted to a sum of squared expectations later on.  Define the Hilbert Space $\mathcal{L}_T$ as the space of all functions with finite $||\cdot||_T$ and equip it with the inner product $\langle \cdot, \cdot\rangle_T$ that induces the norm $||\cdot||_T$.
\begin{align}
    \mathcal{L}_{T} &\equiv \left\{\phi(x)\text{ such that } ||\phi||_T < \infty \right\} \\
     \langle \phi_1, \phi_2 \rangle_{T} &\equiv  \int_{-\infty}^\infty \phi_1(x)\phi_2(x)\varphi(x)dx\\
     \langle \phi, \phi \rangle_T &= ||\phi||_T^2
\end{align}

\noindent $\mathcal{L}_T$ is a very large set of functions. It includes, for example, all PDFs that have bounded heights and indeed all PDFs in $\mathcal{L}_2$. 

Members of this space can be expressed in terms of basis functions. Let $\psi_j$ be the normalized Hermite polynomials $\psi_j(t) =\frac{1}{\sqrt{j!}} He_j\left(t\right)$.\footnote{The  Hermite Polynomials are defined as: $
He_j(t)=\sum_{l=0}^{[j/2]}(-1)^l \frac{(2l)!}{2^ll!}\binom{j}{2l}t^{j-2l}$. } Example 1 of \cite{CarrascoPaper} shows that $\psi_j$ are a complete orthonormal basis for $\mathcal{L}_T$:
\begin{align}
\langle \psi_j, \psi_k\rangle_T &= \mathbf{1}\left\{j=k\right\} \label{eq:normalpsi}
\end{align}
\noindent Since $\psi_j$ form a complete orthonormal basis, any member of $\mathcal{L}_T$ can be expressed in terms of a (possibly infinite)  linear combination of complete orthonormal basis functions:
\begin{align}
    \phi(t) &= \sum_{j=0}^\infty \langle \phi,\psi_j \rangle_T\:\psi_j(t)\quad \forall \phi \in \mathcal{L}_T \label{eq:polynomial_representation}
\end{align}

The basis polynomials are useful because they allow us to express the norm $||\cdot||_T$ as a sum of inner products over them. Theorem 2.8 of \cite{CARRASCO2Handbook} (Parseval's Formula) says:
\begin{align}
        ||\phi||_T^2 &=    \sum_{j=0}^\infty \langle \phi,\psi_j \rangle_T^2 \quad \forall \phi \in \mathcal{L}_T  \label{eq:norm_as_sum}
\end{align}

Inner products are helpful because they can be interpreted as expectations. Consider any random variable $A$ with PDF $f_A \in \mathcal{L}_T$. Since integrals over densities are expectations by definition, the inner products correspond to expectations:
\begin{align}
    \langle f_A,\psi_j \rangle_T = \int_{-\infty}^\infty f_A(t)\psi_j(t)\varphi(t)dt = \mathbb{E}\left[\psi_j(A)\varphi(A)\right]\label{eq:integral_as_expectation}
\end{align}

The purpose of setting up this Hilbert Space is to express the distance between two PDFs in terms of expectations. For example, if $A,B$ are random variables with PDFs in $\mathcal{L}_T$, then combining Equation (\ref{eq:norm_as_sum}) with (\ref{eq:integral_as_expectation}):
\begin{align}
    ||f_A-f_B||_T^2 &=    \sum_{j=0}^\infty (\mathbb{E}\left[\psi_j(A)\varphi(A)\right]-\mathbb{E}\left[\psi_j(B)\varphi(B)\right])^2
\end{align}

These expectations are easy to work with because Lemma \ref{lem:bound_coeffs} shows that they are taken over uniformly bounded random variables---regardless of the distribution of $A$. 

\begin{lemma}\label{lem:bound_coeffs}
$$\sup_{t\in\mathbb{R}}\left|\psi_j(t)\varphi(t)\right| \leq \frac{1}{\sqrt{2\pi}}\qquad \forall j\in \{0,1,\cdots\}$$  

    Proof: Section \ref{proof:lem:bound_coeffs}
\end{lemma}

\subsection{Sharp Testable Restriction}

We can use the properties of the Hilbert space $\mathcal{L}_T$ developed in the previous section to convert $\mathbf{H}_0$ into a testable restriction on a vector of moments of $T \mid R$. First we need to impose one mild assumption on the $p$-hacking process: that it results in a reported $t$-statistic with a PDF that does not explode too quickly to infinity:

\begin{assumption}\label{assum:cts_TR}
    $T \mid R$ is a continuous random variable with PDF $f_{T \mid R}\in \mathcal{L}_T$.
\end{assumption}

\noindent Assumption \ref{assum:cts_TR} is very weak for two reasons. These are discussed in the remarks below.

\begin{remark}\normalfont
    The first reason that  Assumption \ref{assum:cts_TR} is mild is  that Bayes' Rule reveals that it holds in every case of interest. If $\mathbb{P}[R]>0$, then $T \mid R$ is continuous with PDF   $f_{T \mid R}(t) = \frac{f_T(t)\mathbb{P}[R|T=t]}{\mathbb{P}[R]}$, so $f_{T \mid R}$ is in $\mathcal{L}_T$ if $f_T$ is. If $T=H+Z$ where $Z$ has a PDF $f_Z$ that is bounded above (e.g. normal, Student-$t$, etc), then $f_T \in \mathcal{L}_T$ and Assumption \ref{assum:cts_TR} is satisfied. 
\end{remark}

\begin{remark}\normalfont
     The second reason that  Assumption \ref{assum:cts_TR} is mild is that in practice it will hold mechanically because the meta-analyst usually must de-round the $t$-statistics. De-rounding is done by adding uniformly distributed noise at the level of the smallest significant digit. The uniform noise guarantees that $f_{T \mid R}$ exists and has bounded height and therefore must be a member of $\mathcal{L}_T$.
\end{remark}


 We now state the testable restriction in Equation (\ref{eq:testable_restriction}) below. Recall that $\mathcal{D}$ was defined as the set of all discrete distributions of real-valued random variables. The restriction says that the distance between  $f_{T\mid R}$ and the set of all possible honest densities $f_T$ for which $H$ has a discrete distribution $\Pi \in \mathcal{D}$ is zero. This essentially means that the observed $t$-curve cannot be separated from the possible set of undistorted $t$-curves. 
 \begin{equation}\label{eq:testable_restriction}
 \inf_{\Pi \in \mathcal{D} } \left|\left|f_{T\mid R}- \int_{-\infty}^\infty \varphi(t-h)d\Pi(h) \right|\right|_T =0
\end{equation}

\begin{remark}
    \normalfont Even though the minimization in  (\ref{eq:testable_restriction}) is over all discrete distributions, we do not assume that $\Pi_0$ is discrete. Intuitively, minimization over discrete distributions yields the same infimum as minimizing over all distributions because the discrete distributions are dense given this norm.
\end{remark}

Theorem \ref{thm:testable_implication} below shows that Restriction  (\ref{eq:testable_restriction}) is sharp, i.e. equivalent to the null hypothesis. This means that selective reporting is detectable if and only if it separates $f_{T\mid R}$ from the set of possible honest $t$-curves in the $||\cdot||_T$ norm.
\begin{theorem}\label{thm:testable_implication} If Assumption \ref{assum:cts_TR} holds, then:
 $$\mathbf{H}_0 \iff  (\ref{eq:testable_restriction})$$
  Proof: Section \ref{proof:thm:testable_implication}
\end{theorem}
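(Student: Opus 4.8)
The plan is to prove the equivalence by chasing the definitions in both directions, with the only subtlety being the passage between arbitrary bounded densities and the dense subclass $\mathcal{F}_\infty$ used in the infimum. Recall that $\mathbf{H}_0$ says that there exists $\Pi \in \mathcal{P}$ with $T|R \stackrel{d}{=} h' + Z$, $h'\sim\Pi$, $Z\sim N(0,1)$; since convolution with a standard normal is injective and produces a bounded smooth density, this is the same as saying $f_{T|R}(t) = \int \varphi(t-h)\,d\Pi(h)$ for some distribution $\Pi$.

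For the direction $\mathbf{H}_0 \Rightarrow (\ref{eq:testable_restriction})$, I would first handle the case where the witnessing $\Pi$ already has a density $\pi \in \mathcal{F}_\infty$: then the integrand in (\ref{eq:testable_restriction}) is identically zero for that $\pi$, so the infimum is zero. The work is the general case where $\Pi$ need not be continuous (e.g. has atoms). Here I would approximate $\Pi$ by a sequence $\pi_n \in \mathcal{F}_\infty$ — for instance $\pi_n = \Pi * \varphi_{1/n}$, the convolution of $\Pi$ with a narrow normal, which is a genuine bounded continuous density — and show that $\int \varphi(t-h)\pi_n(h)\,dh \to f_{T|R}(t)$ in the $||\cdot||_Y$ norm. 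Convolution with $\varphi$ followed by the $\varphi_{1/n}$-smoothing is the same as convolution of $\Pi$ with $\varphi_{1+1/n}$, so the approximating density is $\mathbb{E}_\Pi[\varphi_{1+1/n}(t-h)]$, and I would bound its $||\cdot||_Y$-distance to $\mathbb{E}_\Pi[\varphi(t-h)]$ using that $\varphi_{1+1/n}\to\varphi$ uniformly and in $L^1$, together with a uniform-in-$n$ bound on the heights (all these densities are bounded by $\sup\varphi = 1/\sqrt{2\pi}$ times a constant, and $\varphi_{\sigma_Y^2}$ is integrable, so dominated convergence applies). This shows the infimum is $\le \lim_n ||\cdot||_Y = 0$.

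For the converse $(\ref{eq:testable_restriction}) \Rightarrow \mathbf{H}_0$, suppose the infimum is zero; take a minimizing sequence $\pi_n \in \mathcal{F}_\infty$ with $g_n \equiv \int\varphi(\cdot - h)\pi_n(h)\,dh \to f_{T|R}$ in $||\cdot||_Y$. Each $g_n = \mathbb{E}_{\Pi_n}[\varphi(\cdot-h)]$ where $\Pi_n$ is the distribution with density $\pi_n$. The goal is to extract a subsequential weak limit $\Pi_n \rightharpoonup \Pi$ and pass to the limit in the convolution identity. The key is tightness of $\{\Pi_n\}$: since the $t$-curves $g_n$ converge (in $||\cdot||_Y$, hence — because $\varphi_{\sigma_Y^2}$ is bounded below on compacts — in $L^2_{loc}$ and along a further subsequence a.e.), and since a convolution $\mathbb{E}_{\Pi_n}[\varphi(\cdot-h)]$ that fails to be tight would have to leak mass to infinity and drive the $t$-curve pointwise to zero on every compact set, I can rule out escape of mass and conclude tightness. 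Then Prokhorov gives a weakly convergent subsequence $\Pi_{n_k}\rightharpoonup\Pi$ for some sub-probability measure $\Pi$; because $\varphi(t-\cdot)$ is bounded and continuous, $g_{n_k}(t) = \int\varphi(t-h)\,d\Pi_{n_k}(h) \to \int\varphi(t-h)\,d\Pi(h)$ pointwise, and matching this with the a.e. limit $f_{T|R}$ identifies $f_{T|R}(t) = \int\varphi(t-h)\,d\Pi(h)$; integrating both sides over $t$ shows $\Pi$ is in fact a probability measure, so $\Pi\in\mathcal{P}$ and $\mathbf{H}_0$ holds.

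The main obstacle is the tightness / no-escape-of-mass argument in the converse direction: convergence of the infimum to zero is only an $L^2$-type statement about the $t$-curves, and one must carefully exclude the possibility that the approximating mixing distributions run off to $\pm\infty$ while their convolutions still approximate $f_{T|R}$ in the weighted norm. I expect this to need an explicit lower-bound argument — e.g. showing that if $\Pi_n([-M_n,M_n]^c)$ stays bounded away from zero for every $M_n\to\infty$ then $\liminf \|g_n\|_Y$ is strictly below $\|f_{T|R}\|_Y$, contradicting norm convergence — possibly combined with the fact that $f_{T|R}$, being a reported $t$-curve, is itself a genuine density with no mass at infinity. Everything else (the smoothing approximation in the forward direction, injectivity of normal convolution, upgrading weak convergence to the pointwise convolution identity) is routine.
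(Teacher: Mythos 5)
Your argument is correct and its forward direction is essentially the paper's: the paper also approximates the (possibly discrete) witnessing $\Pi$ by a sequence of continuous densities and uses weak-convergence reasoning to drive the $||\cdot||_Y$ residual to zero, though it invokes Portmanteau abstractly while you carry out the explicit Gaussian-smoothing construction $\pi_n = \Pi * \varphi_{1/n}$. Either works.

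Where you genuinely diverge is the converse. The paper argues by contrapositive and dispatches the key step in one line: since each $\Pi\in\mathcal{P}$ gives positive residual and ``$\mathcal{P}$ is closed,'' the infimum over $\mathcal{P}$ is positive, hence so is the infimum over the subset $\mathcal{F}_\infty$. This leans implicitly on the image set $\{\,\int\varphi(\cdot-h)\,d\Pi(h):\Pi\in\mathcal{P}\,\}$ being $||\cdot||_Y$-closed, which is not obvious (the image is in fact \emph{not} closed as a set — the zero function is a limit point along $\Pi_n=\delta_n$ — so one must use that the target $f_{T|R}$ has total mass one). Your approach is the explicit version of what the paper waves at: take a minimizing sequence of mixing densities, pass to a vague/Helly subsequential limit $\Pi$ (your citation should strictly be Helly selection rather than Prokhorov, since you don't have tightness in hand a priori), use that $\varphi(t-\cdot)$ vanishes at infinity to get pointwise convergence of the convolutions, identify the pointwise limit with the a.e.\ limit of $g_{n_k}$, and then integrate both sides to upgrade the sub-probability limit $\Pi$ to a genuine probability measure. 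This last integration step is what actually rules out mass escaping to infinity — your earlier paragraph trying to argue tightness directly is superfluous once you notice that tightness falls out \emph{a posteriori} from $\int f_{T|R}=1$. In short: your converse is a more careful and self-contained argument than the paper's, and it closes a gap the paper leaves tacit; the two proofs are the same in spirit but not in rigor.
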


Restriction (\ref{eq:testable_restriction})  is stated in terms of densities, which makes it easy to interpret but hard to see how it can be tested empirically. We can use the facts about Hilbert spaces from Section \ref{sec:hilbert} to rewrite the $||\cdot||_T$ distance between PDFs as a difference between expectations. Define $\theta_0$ as the vector of expectations with countable elements below. Here the element $\theta_{0,j}$ is the $j^{\text{th}}$ weighted Hermite coefficient of the observed $t$-curve.
\begin{equation}
    \theta_{0,j}\equiv \mathbb{E}\left[\psi_j(T)\varphi(T)\mid R \right]
\end{equation}

For any ${\Pi}\in \mathcal{P}$ and any $j\in \{0,1,\cdots\}$, define $c_j(\Pi)$ as:
\begin{align}
    c_j(\Pi)  &\equiv \mathbb{E}_{\Pi}\left[\psi_j(H+Z)\varphi(H+Z) \right] \qquad\text{where $H\sim \Pi$, $Z\sim N(0,1)$ and $H\independent Z$}
\end{align}

Lemma \ref{lem:ynorm_infsum} now expresses the distance $||\cdot||_T$ between two PDFs as the sum of differences in squared expectations. 

\begin{lemma}\label{lem:ynorm_infsum}

If Assumption \ref{assum:cts_TR} holds, then:

    $$  \left|\left|f_{T\mid R}- \int_{-\infty}^\infty \varphi(t-h)d\Pi(h) \right|\right|_T =  \sqrt{  \sum_{j=0}^\infty    \left(\theta_{0,j} -c_j(\Pi)\right)^2 } < \infty$$



    Proof: Section \ref{proof:lem:ynorm_infsum}
\end{lemma}
\noindent The utility of Lemma \ref{lem:ynorm_infsum} is simple: instead of comparing entire density functions, we can compare expectations of known, bounded functions of the data. Testing whether two densities are close reduces to testing whether two vectors of sample means are close.


In practice, the sample analogue of the infinite sum in Lemma \ref{lem:ynorm_infsum} behaves poorly because if we were to plug a vector of sample means $\widehat{\theta}_n$ in for $\theta_0$, the infinite sum may diverge. To avoid this issue, we regularize the projection using the spectral cutoff method.\footnote{The deconvolution literature shows that spectral cutoff  has a drawback called the Gibbs Phenomenon where the deconvolved PDF does not converge in the sup-norm \citep{on_Gibbs,Hohage}. We are not concerned about this here because we are not estimating the PDF directly. Spectral cutoff is appropriate here because by Lemma \ref{lem:cj_Pi} the coefficients decay exponentially fast in $j$.} This means summing only up to some $J<\infty$. Define $ d_J(\theta)$ as the residual of the $||\cdot||_2$ projection of the first $J$ components:
 \begin{equation}\label{eq:Dj_definition}
     d_J(\theta) \equiv  \inf_{\Pi \in \mathcal{D}} \sqrt{ \sum_{j=0}^J (\theta_j-c_j(\Pi))^2} 
 \end{equation}
 \noindent In other words, $d_J(\theta_0)$ measures distortions in the $t$-curve as the Euclidean distance between a vector of $J$ expectations over $T|R$ and the set of all such vectors that are possible without selective reporting. 

 Theorem \ref{thm:H0_iff_alldJzero} shows that any valid test of $d_J(\theta_0)=0$ is also valid test of ${\mathbf{H}_0} $ and that, for any fixed alternative, testing $d_J(\theta_0)=0$ is as good as testing $\mathbf{H}_0$ when $J$ is large enough.

\begin{theorem}\label{thm:H0_iff_alldJzero}

If Assumption \ref{assum:cts_TR} holds, then:

     $$\mathbf{H}_0 \iff  d_J(\theta_0)=0 \quad \forall\: {J>0}$$
     Proof: Section \ref{proof:thm:H0_iff_alldJzero}
\end{theorem}

Now that we have translated the null hypothesis into Euclidean space, we will rewrite it again in an approximate form that a computer program can feasibly test.

\subsection{Feasible Distance Minimization}

 Taking the infimum in (\ref{eq:Dj_definition}) over the set of all discrete distributions $\mathcal{D}$ is computationally challenging because the $\Pi$ are high dimensional. Instead we will minimize over a subset of distributions and then bound the resulting approximation error. Specifically, consider the subset $ {\mathcal{D}}_{\mathcal{X}}  \subset \mathcal{D}$ where the distributions have support contained within a chosen finite set $ \mathcal{X}\subset \mathbb{R}$. 
 \begin{align}
 {\mathcal{D}}_{\mathcal{X}} &\equiv \{\Pi \in \mathcal{D}\text{ s.t. $\Pi$ has support contained in $\mathcal{X}$}\}
\end{align}

The resulting approximate distance $\widetilde{d}_J(\theta)$ is easy to compute in practice because it is the minimum over a closed and convex set:
\begin{align}
       \widetilde{d}_J(\theta) &\equiv \min_{\Pi \in {\mathcal{D}}_{\mathcal{X}}}\sqrt{\sum_{j=0}^J  \left(\theta_j-c_j(\Pi)\right)^2 }
\end{align}

 Projecting onto ${\mathcal{D}}_{\mathcal{X}}$ instead of $\mathcal{D}$ incurs some approximation error that can be forced to be small by using a grid $\mathcal{X}$ wide and fine enough. Lemma \ref{lem:max_grid_approx_error} tells us how wide and fine the grid must be in order to control the approximation error. This holds uniformly across $J\in \mathbb{N}$.

\begin{lemma}\label{lem:max_grid_approx_error}
Let Assumption \ref{assum:cts_TR} hold. Let $\mathcal{X} \subset [-L,L]$ be a grid of $n_x$ evenly spaced points with spacing $\delta = \frac{2L}{n_x-1}$. For any $\theta\in \mathbb{R}^{J+1}$,\begin{align*}
   0\leq   \widetilde{d}_J(\theta)-d_J(\theta) &<\mathcal{E}\left(L,\delta\right) 
\end{align*}
    where:
    \begin{align*}
    \mathcal{E}\left(L,\delta\right) &\equiv \sqrt{\int_{-\infty}^\infty \varphi(t)\varphi(t-L)^2dt }+\sup_{y\in [-L,L]}\sqrt{\int_{-\infty}^\infty \varphi(t)\left(\varphi(t-y)-\varphi(t-x(y))\right)^2dt }
\end{align*}
where $x(y)$ is the closest member of $\mathcal{X}$ to y.

Proof: Section \ref{proof:lem:max_grid_approx_error}.
\end{lemma}

\begin{remark} \normalfont
    In our simulations and empirical application we set $\mathcal{X}$ to be a grid of 3000 evenly spaced points between $-6.5$ and $6.5$. A practical rule of thumb for applications is to make $\mathcal{E}(L,\delta)$ small relative to the $\Delta_\nu$ calculated in Section \ref{sec:student_t}. This yields $\mathcal{E}\left(6.5,\frac{13}{2999}\right)  = 0.00063$ which is small compared to the critical values and test statistics in the application. We add a very large scalar (9999) to the grid which only makes $\mathcal{E}\left(6.5,\frac{13}{2999}\right) $ more conservative.
\end{remark}

The next problem is computing the set of vectors onto which the projection is being made. Finding the vector elements $c_j(\Pi)$ is much easier if they are expressed in terms of the distribution of $\Pi$ only. Lemma \ref{lem:cj_Pi} uses the Singular Value Decomposition from Example 1 of \cite{CarrascoPaper} and \cite{faridani2025testingunderpoweredliteratures} to rewrite $c_j(\Pi)$. This also reveals that the $c_j(\Pi)$ must decay exponentially fast to zero in $j$.

\begin{lemma}\label{lem:cj_Pi}
Let Assumption \ref{assum:cts_TR} hold. Let $\chi_j(t)\equiv \psi_j(t/\sqrt{2})$ and $\varphi_2(t)\equiv \varphi(t/\sqrt{2})/\sqrt{2}$. For any probability distribution $\Pi$ (discrete, continuous, or otherwise), if $H\sim \Pi$, then:
    $$c_j(\Pi) = \frac{1}{2^{j/2}}\mathbb{E}_\Pi\left[\chi_j(H)\varphi_{2}(H)\right]$$ 

    Proof: Section \ref{proof:lem:cj_Pi}
\end{lemma}

Lemma \ref{lem:cj_Pi} allows us to rewrite this distance as the residual of a projection of a vector $\{\theta_j\}_{j=0}^J$ onto the closed convex set $ \Lambda \subseteq \mathbb{R}^{J+1}$. To see this, first define the vectors  $ \mathbf{b}_{x}$ as the $\mathbf{c}(\Pi)$ for a $\Pi$ that has a single mass point at $x$:
\begin{align}
     \mathbf{b}_{x,j} &\equiv \frac{1}{2^{j/2}} \chi_j(x)\varphi_{2}(x)
\end{align}

\noindent Define $\Lambda$ as the convex hull of all of these $\mathbf{b}_x$ for grid members $x \in \mathcal{X}$:
\begin{align}
         \Lambda &\equiv \text{conv}\left(\mathbf{b}_x \: : \: x \in \mathcal{X}\right)\subseteq \mathbb{R}^{J+1}
\end{align}

\noindent  Now the test statistic can be rewritten again as the projection of a $(J+1)\times 1$ vector onto a closed convex set. Equation (\ref{eq:projform}) below matches setup of the generic test for membership in a convex set from Section 4.1 of \cite{FangSantos}.
 \begin{align}\label{eq:projform}
     \widetilde{d}_J(\theta) &= \min_{\lambda \in \Lambda}\sqrt{\sum_{j=0}^J  \left(\theta_j-\lambda_j\right)^2}
 \end{align}

\subsection{Sample and Test Statistic}

Next we specify the sampling process and the test statistic. Ideally, the meta-analyst would like to plug the true $\theta_0$ into $ \widetilde{d}_J$ and check whether the result is less than the numerical approximation error $\mathcal{E}\left(L,\delta\right) $. This is impossible because the meta-analyst does not observe the full population of $T \mid R$. Instead, the meta-analyst has only a sample of $n$ $t$-statistics $\{t_i\}_{i=1}^n$. Rather than the vector of expectations $\theta_0$ they observe the analogous $(J+1)\times 1$ vector of sample means $\widehat{\theta}_n$: 
\begin{equation}
 \widehat{\theta}_{nj} =\frac{1}{n}\sum_{i=1}^n\varphi(t_i)\psi_j(t_i)\qquad j\in \{0,\cdots J\}   
\end{equation}

\noindent We will assume that the $n$ $t$-statistics are reported by $m$ articles. While $t$-statistics are dependent within an article, they must be independent across articles and the number of $t$-statistics per article must be uniformly bounded. Assumption \ref{assum:articles} formalizes this clustered sampling setup. 
\begin{assumption}\label{assum:articles}
The following all hold:
    \begin{itemize}
        \item The sample is composed of $m$ different articles indexed by $g$ each reporting $N_g$ $t$-statistics.
        \item Article-level data $(N_g, \{T_i\}_{i \in g})$ is independent and identically distributed across articles.
        \item There exists a universal constant $c_0>0$ such that $N_g\leq c_0$.
        \item $\frac{n}{m}\to \mathbb{E}[N_g]$
    \end{itemize}
\end{assumption}
\noindent This standard clustered sampling setup will allow us to show the asymptotic normality of $\widehat{\theta}_n$ in the next section.

This paper proposes $\sqrt{n}\widetilde{d}_J(\widehat{\theta}_n )$ as the test statistic. We will reject the null hypothesis if the test statistic exceeds a given critical value $\text{cv}(\alpha)$.
\begin{equation}\label{eq:rej_rule}
    \text{Reject } \mathbf{H}_0 \text{ if } \sqrt{n}\widetilde{d}_J(\widehat{\theta}_n) > cv(\alpha)
\end{equation}

 \begin{remark}\normalfont
      We normalize by $\sqrt{n}$ rather than $\sqrt{m}$ in our theorems for convenience; the two are asymptotically equivalent up to a constant.
 \end{remark}

\subsection{Critical Values}

\noindent The next task is to specify the critical value $cv(\alpha)$. To do this we first find the asymptotic distribution of the test statistic. Lemma \ref{lem:normality_thetahat} shows that for fixed $J$ the $(J+1)\times 1$ vector $\sqrt{n}\left(\widehat{\mathbf{\theta}}_n-\mathbf{\theta}_0\right)$ converges in distribution to a multivariate normal which we will call $ \mathbb{G}_0$. Similarly, the cluster-bootstrapped version $\widehat{\mathbf{\theta}}_n^*$ (where articles are resampled with replacement) converges conditionally to the same distribution.

\begin{lemma}\label{lem:normality_thetahat}
    If Assumptions \ref{assum:cts_TR}  and \ref{assum:articles} hold, then for any fixed $J$:
    \begin{align*}
         \sqrt{n}\left(\widehat{\mathbf{\theta}}_n-\mathbf{\theta}_0\right) &\to_d N(\mathbf{0},\Omega)\equiv \mathbb{G}_0
    \end{align*}

    And conditional on the sample:
    \begin{align}
         \sqrt{n}\left(\widehat{\mathbf{\theta}}_n^*-\widehat{\mathbf{\theta}}_n\right) &\to_d^* N(\mathbf{0},\Omega)\equiv \mathbb{G}_0
    \end{align}

    Proof: Section \ref{proof:lem:normality_thetahat}
\end{lemma}
\noindent The proof is straightforward because $\widehat{\theta}_n$ is simply a vector of sample means, its elements are uniformly bounded in absolute value (see Lemma \ref{lem:bound_coeffs} in the appendix), and Assumption \ref{assum:articles} organizes units into iid clusters.

The more challenging step is to pass the asymptotic results for $\theta$ through the function $\widetilde{d}_J(\theta )$. The technical difficulty is that we cannot apply the usual delta method to get the asymptotic distribution of the test statistic $\sqrt{n}\widetilde{d}_J(\widehat{\theta}_n)$ because under the null hypothesis, the function $\widetilde{d}_J(\theta)$ is not {\it fully} differentiable at $\theta_0$. This is because the directional derivative is zero in some directions (that point into the set of honest $t$-curves) and positive in others (that point away from the set of honest $t$-curves). 

 Fortunately, \cite{FangSantos} have already shown how to derive a valid bootstrap in just this situation. Their Proposition 4.1  shows that $\widetilde{d}_J(\theta)$ is directionally Hadamard differentiable even under the null hypothesis.\footnote{\cite{FangSantos} attribute the original version of this result to \cite{ZARANTONELLO1971237}.} Equation (\ref{eq:projform}) has already expressed our test statistic in their notation. They show that the asymptotic distribution of the residual of a projection of a vector plus noise onto a convex set is the residual of a projection of the noise onto the {\it tangent cone} $TC_{\theta}$ evaluated at the true projection point $\theta_0$.  The tangent cone of the convex set $\Lambda$ at the point $\theta \in \Lambda$ is defined as the closure of the set of all vectors that point from $\theta$ into the set $\Lambda$:
 \begin{equation}
     TC_{\theta} \equiv \text{cl}\{ \mathbf{r}=\alpha(\mathbf{u}-\theta)\text{ for }\alpha \geq 0, \: \mathbf{u}\in \Lambda \}
 \end{equation}
\noindent The norm of the projection residual $ \phi'_{\theta_0}(\mathbf{e})$ of a vector $\mathbf{e}$ onto the tangent cone $ TC_{\theta_0}$ is defined as:
\begin{align}
    \phi'_{\theta_0}(\mathbf{e})&\equiv \inf_{\mathbf{r}\in TC_{\theta_0}}\sqrt{\sum_{j=0}^J(e_j-r_j)^2  }
\end{align}
\noindent Intuitively, the tangent cone captures the directions in which $\widehat{\theta}_n$ can move while remaining consistent with the null. The test statistic is large when $\widehat{\theta}_n$ moves in a direction that points away from the null.
 
With these definitions in hand, applying Proposition 4.1 of \cite{FangSantos} shows that the asymptotic distribution of the test statistic is the distribution of the Gaussian $\mathbb{G}_0$ projected onto the tangent cone:
\begin{equation}
    \theta_0 \in \Lambda \implies  \sqrt{n}\widetilde{d}_J(\widehat{\theta}_n)\to_d \phi'_{\theta_0}(\mathbb{G}_0)
\end{equation}

 To estimate the directional derivative $\phi'_{\theta_0}$ we will use the numerical estimator $\widehat{\phi}'_{n,s_n}$ from Equation (25) of \cite{FangSantos}:
\begin{align}
    \widehat{\phi}'_{n,s_n}(h) \equiv \frac{1}{s_n}\left(\widetilde{d}_J(\widehat{\theta}_n +s_n h)-\widetilde{d}_J(\widehat{\theta}_n)\right)
\end{align}
\noindent where $s_n = n^{-1/3}$. Then as a consequence of Theorem 3.2 of \cite{FangSantos}:
\begin{align}
    \widehat{\phi}'_{n,s_n}(\sqrt{n}(\widehat{\theta}_n^*-\widehat{\theta}_n)) \to_d \phi'_{\theta_0}(\mathbb{G}_0)
\end{align}

\begin{remark}\normalfont
    If instead $\theta_0 \notin \Lambda$, then $\widetilde{d}_J(\widehat{\theta}_n)$ becomes  fully differentiable at $\theta_0$ and the usual delta method shows that $\widetilde{d}_J$ is asymptotically normal centered on $\widetilde{d}_J(\theta_0)$.
\end{remark}

This motivates the following critical value for the projection test.  If $F^*_n$ is the bootstrapped CDF of $\widehat{\phi}_{n,s_n}'(\sqrt{n}(\widehat{\theta}_n^*-\widehat{\theta}_n))$, then the critical value for the projection test of size $\alpha \in (0,1)$ is:
\begin{align}
    \text{cv}(\alpha) \equiv F_n^{*,-1}\left(1-\alpha\right)+\sqrt{n}\mathcal{E}\left(L,\delta\right) 
\end{align}

Theorem \ref{thm:validity_and_consistency} formalizes the logic sketched above and guarantees that the test is both valid and consistent against fixed alternatives for large $J$ and dense grids:

\begin{theorem}\label{thm:validity_and_consistency}
    If Assumptions \ref{assum:cts_TR} and \ref{assum:articles} hold, then both of the following also hold:

\begin{enumerate}
    \item     
$\mathbf{H}_0 \implies \limsup_{n\to \infty}\mathbb{P}\left[\sqrt{n}\widetilde{d}_J(\widehat{\theta}_n ) > \text{cv}(\alpha)\right]\leq \alpha$
\item Fix any alternative distribution of $T \mid R$ such that $\mathbf{H}_0$ is false. If $J,L,\delta^{-1}$ are sufficiently large:
    $$\lim_{n\to \infty}\mathbb{P}\left[\sqrt{n}\widetilde{d}_J(\widehat{\theta}_n ) > cv(\alpha)\right]=1$$
\end{enumerate}

Proof: Section \ref{proof:thm:validity_and_consistency}

\end{theorem}

\begin{remark}\normalfont
    By increasing $J,L,\delta^{-1}$ slowly enough with $n$, the test can be guaranteed to be consistent against any detectable alternative. Due to page number considerations (and the unclear implications of the rates for actual practice), we leave the computation of the ideal rate to future work. We find that $J=30$ is plenty in both our simulations and application. 
\end{remark}

\subsection{Computation in Practice}

The computational task of computing this projection distance is well-conditioned even though the problem of actually identifying the minimizing distribution $\Pi$ is ill-conditioned. We compute projections with the \verb|osqp| package in R and set $J=30$ in our empirical application and simulations. The application includes $J=20$ as a robustness check. The test statistic  $\widetilde{d}_J(\theta)$ can be rewritten as an explicit convex program that is legible to convex programming packages:
\begin{align}
       \widetilde{d}_J(\theta) &= \min_{\mathbf{a} \in [0,1]^{|\mathcal{X}|}\text{ s.t. }\mathbf{a}'\mathbf{1}=1}\sqrt{\sum_{j=0}^J  \left(\theta_j-\sum_{x \in \mathcal{X}}\frac{a_{x} }{2^{j/2}} \chi_j(x)\varphi_{2}(x)\right)^2 }
 \end{align}
\begin{remark}\label{rem:symmetrization}\normalfont
    In practice we recommend symmetrizing the $t$-statistics. $t$-statistics of two-sided t-tests are usually reported as an absolute value  and papers may report a mix of one-sided and two-sided $t$-statistics. To account for this, take the sample of $t$-statistics $\{t_i\}_{i=1}^n$ and replace it with $\{|t_i|,-|t_i|\}_{i=1}^n$. Under the null hypothesis this is equivalent to replacing  $\Pi$ with its symmetrized version. When resampling with the bootstrap, first resample from the original dataset $\{t_i\}_{i=1}^n$ and then symmetrize. This makes the projection test invariant to differences in one-sided and two-sided reporting.\footnote{Symmetrization also improves the performance of numerical solvers by doubling the support points of $\Pi_0$ under the null without changing whether the null is true or not.} We recommend symmetrization as a default in every application.
\end{remark}

\begin{remark}\label{rem:shifting}\normalfont
   In practice it is helpful for power to shift all the $t$-statistics by 1.96 or another relevant critical value. This is because the norm $||\cdot||_T$ weights down distortions in $T\mid R$ far from zero. But we expect distortions to be near $1.96$. Therefore, power may increase if the meta-analyst simply subtracts $1.96$ from each $t$-statistic in the dataset (after symmetrization). We recommend this translation as a default in every application.
\end{remark}

\section{Avoiding Spurious Rejections}\label{sec:misspecification}

The projection test is very sensitive to small distortions in the $t$-curve and the meta-analyst might be concerned that it will instead pick up small benign distortions in the $t$-curve. In this section we will show how to rule these out.

In order to interpret a rejection of ${\mathbf{H}_0}$ as evidence of $p$-hacking, we need to implicitly assume that $T|H \sim N(H,1)$ exactly in the absence of selective reporting. In practice, this normality assumption can be slightly violated. We wish to reject the null only when the distortions in the $t$-curve were severe enough not to be attributable to small amounts of non-normality. We consider two main causes of benign violations: (1) the $t$-statistic is actually distributed Student-$t$ instead of exactly normal and (2) the $t$-statistic was reported after rounding and then de-rounded by the meta-analyst. 

These two factors can only change an honest $t$-curve by a small amount. So by augmenting the critical value, the meta-analyst can avoid spurious rejections from either of these two sources. The ease of this augmentation is one of the main advantages of the projection test.

\subsection{The Weakened Null Hypothesis}

Rejections of $\mathbf{H}_0$ may be spurious when the $t$-curve is slightly distorted by innocuous factors unrelated to selective reporting. To avoid this, we can weaken the null hypothesis---making it harder to reject---by allowing for distortions in the $t$-curve commensurate with the noise term in the $t$-statistic having an unknown PDF $g$ that is close—--but not equal—--to the standard normal. Suppose that the meta-analyst knows that the total difference between the two densities is upper bounded by some constant $\Delta \geq 0$. The augmented null hypothesis $\mathbf{H}_0^{\Delta} $ formalized below states that the distribution of $T \mid R$ can be explained by a true effect $H$ plus a noise term $U$ that is within $\Delta$ of a standard normal:

\begin{definition}
  Given a chosen number $\Delta \geq 0$, the weakened null hypothesis  $\mathbf{H}_0^{\Delta} $ states that there exist $\Pi \in \mathcal{P}$ and a function $g(t,h)$ such that:
 \begin{enumerate}
     \item  $f_{T \mid R}(t) = \int_{-\infty}^\infty g(t,h)d\Pi(h)$
     \item $g(t,h)\geq 0$ and $\int_{-\infty}^\infty g(t,h)dt = 1$ for all $h\in \mathbb{R}$
     \item $\sup_{h\in\mathbb{R}}||g(t,h)-\varphi(t-h)||_T \leq \Delta $
 \end{enumerate}
\end{definition}

\noindent So if $\Delta = 0$, then $\mathbf{H}_0^{\Delta} $  agrees with $\mathbf{H}_0 $. Otherwise, $\mathbf{H}_0^{\Delta} $  is weaker. 

 We test $\mathbf{H}_0^{\Delta}$ using the same test statistic as before: $\sqrt{n}\widetilde{d}_J(\widehat{\theta}_n )$. The only difference is that we now add $\sqrt{n}\Delta$ to the critical value.
 
 When the researcher is unwilling to commit to a specific value of $\Delta$, they can instead report the {\it breakdown statistic}, or the smallest value $\widehat{B}$ of $\Delta$ that would overturn their detection of selective reporting:
\begin{equation}
    \widehat{B} \equiv \max\{0,\widetilde{d}_J(\widehat{\theta}_n )- cv(\alpha)/\sqrt{n}\}
\end{equation}

\noindent $\widehat{B}$
  is the smallest deviation from normality that would be sufficient to explain away the observed lack of smoothness in the $t$-curve. The next two sections discuss how large $\widehat{B}$ needs to be for the meta-analyst to be confident that the rejection was probably not the spurious result of common distortions in the $t$-curve that do not come from selective reporting.

\subsection{$\Delta$ for the Student-$t$ distribution}\label{sec:student_t}

In practice, the researcher does not know the true standard error of the point estimate with certainty. This means that when they construct the $t$-statistic, they divide the point estimate by an estimate of the standard error and the result is that $T$ is distributed Student-$t$ with $\nu$ degrees of freedom and noncentrality parameter $H$, where $\nu$ can be thought of as the effective sample size. We can calculate the value of $\Delta$ associated with approximating the Student-$t$ with the normal using numerical integration. For instance, for a study with effective sample size of $\nu=120$:
\begin{align*}
     \sup_{h\in\mathbb{R}}\left|\left|t_{120,h}(t) - \varphi(t-h)\right|\right|_T  < .00101
\end{align*}

Therefore, if we had a set of studies with effective sample sizes predominantly larger than 120 and $\widehat{B}>.00101$, we could conclude that the detection of $p$-hacking was not just a spurious outcome of approximating the normal distribution with the Student-$t$ distribution. 

\subsection{$\Delta$ for De-rounding}\label{sec:de-rounding}

In practice,  $t$-statistics  are often reported after some rounding. To avoid spurious detections of $p$-hacking resulting from rounding discontinuities, the meta-analyst  must de-round the meta-sample by adding random uniformly distributed noise commensurate to the number of significant digits \citep{elliott2024powertestsdetectingphacking}. This process does not leave the $t$-curve completely undistorted. It is therefore prudent to check whether $\widehat{B}$ is small enough to be explained by de-rounding distortions to the $t$-curve.

We can model the de-rounded $t$-statistic as $T=H+Z+V$ where $V$ is mean zero noise independent of $H,Z$ that comes from de-rounding. Since $V$ is small and of expectation zero, we can bound $\Delta$ by bounding the sup-norm of the difference between the density $f_{Z+V}$ and $f_Z = \varphi$. Proposition \ref{prop:de-rounding} bounds the distortion from normality $\Delta$ that can be incurred by rounding and de-rounding.

\begin{proposition}\label{prop:de-rounding}
    Suppose that $U=Z+V$  where $Z\sim N(0,1)$ and $V$ is continuous with $\mathbb{E}[V]=0$, $V\independent Z$, and $|V|<1$ almost surely. Let $g$ denote the PDF of $U$. Then, $$\sup_{h\in\mathbb{R}}||g(t-h)-\varphi(t-h)||_T \leq \sqrt{\frac{1}{2\pi}} \frac{\mathbb{E}[V^2]}{2}$$   

    Proof: Section \ref{proof:prop:de-rounding}.
\end{proposition}

For example, if the $t$-statistic is observed up to two decimal places, then $|V|< 0.005$ with probability 1. Thus,  whenever $\widehat{B} >   5\times 10^{-6} \geq \sqrt{\frac{1}{2\pi}} \frac{.005^2}{2}$, we can conclude that the detection of $p$-hacking was not a spurious consequence of de-rounding. This is true of all of the breakdown statistics reported in our empirical application later on in Section \ref{sec:application}.\footnote{When $t$-statistics were converted from $p$-values (as opposed to directly from the $t$-statistics themselves) that were rounded and de-rounded, this analysis is no longer exact because conversion to $t$-statistic makes the rounding error no longer expectation zero. We still consider this threshold to be a useful heuristic in this possibly intractable case.} 

\subsection{Further Robustness Considerations}

Besides rounding, de-rounding, and the Student-$t$ distributions, the $t$-curve could also be distorted if a large fraction of the $t$-statistics are invalid---i.e. they do not have a conditional distribution within $\Delta$ of the Student-$t$ or normal. In this case $\mathbf{H}_0^{\Delta}$ is violated without selective reporting. The possibility of rejections coming from many invalid $t$-tests is a shortcoming of every test for $p$-hacking, including those of \cite{Elliott}, \cite{bb}, and others. Nevertheless, the presence of many invalid tests would itself be at least as important a finding as selective reporting---and likely has similar implications for the credibility of empirical research.

\section{Simulations}\label{sec:sims}


This section presents two sets of simulations that compare this paper's projection test with the six tests studied by \cite{Elliott}. These show that the projection test controls size and can detect $p$-hacking in cases where these six cannot.

Table \ref{tab:comparison} displays simulated rejection rates for threshold-type $p$-hacking. Here every researcher draws two $t$-statistics with the same true effect $H$ that are otherwise independent. If $T_1 > 1.96$, it is reported. Otherwise the maximum of the two $t$-statistics is reported. This creates a discontinuity in the $t$-curve and is therefore always detectable. We consider seven true distributions of $H$. These are: $H=0$ almost surely, $H=0.5$ almost surely, a 50-50 mixture of two normals with one centered on zero and the other on 2, Poisson(2), $\chi^2(2)$, a uniform between $-0.5$ and $0.5$, and the standard normal. These distributions were chosen to show that size is not sensitive to the smoothness of $\Pi_0$ and to reveal how the power of all tests declines when $T$ has little mass near 1.96.

Table \ref{tab:comparison} compares the size and power of the projection test to the six tests in \cite{Elliott} for threshold $p$-hacking. There are two main results. First, the projection test generally has higher power than all of the \cite{Elliott} tests, with especially large gains in several designs (see the lower panel). Power remains low for all tests in some smooth designs with little mass near the $1.96$ threshold.

Table \ref{tab:comparison} also shows that the projection test generally controls size (upper panel) but with some oversize in the special case where $H$ is a point mass at zero. Two tests from \cite{Elliott} have even higher size for this DGP. This is not very concerning because the fully null DGP is a literal edge case where the numerical projection solver must project onto a single corner of the simplex, resulting in nontrivial numerical error in the test statistic. Symmetrization (see Remark \ref{rem:symmetrization}) guarantees that $H=0$ is the only such corner case and we confirm in Table \ref{tab:comparison} that the oversize disappears when $H$ is shifted to $0.5$ or when mass is distributed over more points (the Poisson).

\begin{remark}\normalfont
   Most of these tests in Table \ref{tab:comparison} have type-1 error rates less than 5\% because they are not ``similar tests"---i.e. the null hypothesis of no selective reporting is not a single distribution of $T$. 
\end{remark}

\begin{table}[ht]
\begin{center}
\caption{\label{tab:comparison} Threshold $p$-hacking}
\begin{tabular}{ll|c|cccccc}
 \hline 
   && &\multicolumn{6}{c}{\textbf{\cite{Elliott}}} \\  \hline
$p$-Hacking & $\Pi_0$ & Proj. & LCM & Fisher & Disc. & Binom. & CS1 & CS2B\\ 
  \hline
     \multicolumn{9}{c}{Size}\\  \hline
None & Poisson(2) & 0.024 & 0.000 & 0.000 & 0.036 & 0.024 & 0.024 & 0.052\\
None & $\chi^2(2)$ & 0.030 & 0.000 & 0.000 & 0.042 & 0.034 & 0.018 & 0.048\\
None & $H=0$  & 0.088 & 0.032 & 0.034 & 0.050 & 0.034 & 0.098 & 0.092\\
None & $H=0.5$  & 0.046 & 0.000 & 0.000 & 0.030 & 0.026 & 0.030 & 0.058\\
None & Mix Normals & 0.050 & 0.000 & 0.000 & 0.042 & 0.020 & 0.014 & 0.028\\
None & Unif(-0.5,0.5) & 0.052 & 0.004 & 0.000 & 0.042 & 0.034 & 0.070 & 0.082\\
None & N(0,1) & 0.038 & 0.000 & 0.000 & 0.038 & 0.022 & 0.018 & 0.038\\   \hline    \multicolumn{9}{c}{Power}\\  \hline 

Threshold & Poisson(2) & 0.874 & 0.000 & 0.000 & 0.056 & 0.014 & 0.018 & 0.320\\
Threshold & $\chi^2(2)$ & 0.484 & 0.000 & 0.000 & 0.072 & 0.014 & 0.008 & 0.082\\
Threshold & $H=0$  & 0.118 & 0.060 & 0.062 & 0.062 & 0.040 & 0.096 & 0.114\\
Threshold & $H=0.5$  & 1.000 & 0.000 & 0.000 & 0.044 & 0.030 & 0.030 & 0.062\\
Threshold & Mix Normals & 0.670 & 0.000 & 0.000 & 0.076 & 0.018 & 0.022 & 0.130\\

Threshold & Unif(-0.5,0.5) & 0.060 & 0.002 & 0.000 & 0.058 & 0.048 & 0.044 & 0.060\\
Threshold & N(0,1) & 0.082 & 0.000 & 0.000 & 0.068 & 0.020 & 0.016 & 0.050\\
\bottomrule
   \hline
\end{tabular}
\end{center}
 {\footnotesize The upper panel corresponds to size (no $p$-hacking) and the lower panel to power (threshold $p$-hacking). Under threshold $p$-hacking, researchers report the first $t$-statistic $T_1$ they draw if $T_1>1.96$ or the maximum of two $t$-statistics $\max\{T_1,T_2\}$ otherwise.  Column 3 is the rejection rate of the projection test, and columns 4-9 are the rejection rates of the six tests from \cite{Elliott}. The sample size is $n=5000$.   For the tests in \cite{Elliott} we use their values of the tuning parameters and their code file Tests.R. For the projection test we set $J=30$.  We form $\mathcal{X}$ using a grid of 3000 elements evenly spaced between $\pm 6.5$ plus the scalar $9999$.  We symmetrize about zero and shift by $1.96$ as described in Remarks \ref{rem:symmetrization} and \ref{rem:shifting}. We run 100 bootstrap repetitions and 500 simulation repetitions. Projection used \verb|osqp| package in R by \cite{osqp}. Transparency R package available at: \url{https://github.com/sfaridan/When_is_p_hacking_detectable}  }

\end{table}

Table \ref{tab:hard_detect} presents a case where standard tests appear to be powerless but the projection test shines.  The researcher reports  the maximum of eight $t$-statistics that share an $H\sim N\left(2,0.7^2\right)$ but are otherwise independent. This design is close to the undetectable benchmark in Section \ref{sec:max_hacking}, but $H$ is given slightly less variance. This means that the true-effect distribution is smooth enough to heavily attenuate visible distortions, though not smooth enough to make the resulting $t$-curve observationally equivalent to an honest one. Even with 100,000 i.i.d. observations, which is far more than most meta-analyses, none of the six tests of \cite{Elliott} have power greater than nominal size. However, the projection test rejects in every simulation! This demonstrates that when $p$-hacking is almost---but not quite---undetectable, the projection test can still detect it while the standard tests appear unable to. 

\begin{table}[h!]
\begin{center}
    
\caption{\label{tab:hard_detect} Nearly Undetectable Maximization $p$-hacking}
\begin{tabular}{ll|c|cccccc}
 \hline 
   &&&\multicolumn{6}{c}{\textbf{\cite{Elliott}}} \\  \hline
$p$-Hacking & $\Pi_0$ & Proj.& LCM & Fisher & Disc. & Binom. & CS1 & CS2B\\ 
  \hline
  None &$N(2,0.7^2)$ & 0.00 & 0.00 & 0.00 & 0.068 & 0.00 & 0.006 & 0.024\\
Max of 8 $t$-stats &$N(2,0.7^2)$ & 1.00 & 0.00 & 0.00 & 0.066 & 0.00 & 0.021 & 0.019\\
   \hline
\end{tabular}
\end{center}
{\footnotesize Example of $p$-hacking detectable only by the projection test. Row 1 reports size (no $p$-hacking); Row 2 reports power (maximization $p$-hacking). Under maximization $p$-hacking, for each $H$, the experimenter draws eight independent $t$-statistics with $T|H\sim N(H,1)$ and reports only the largest one. True effects were generated as: $H\sim N\left(2,0.7^2\right)$.  Column 3 is the rejection rate of the projection test, and columns 4-9 are the rejection rates of the six tests from \cite{Elliott}.   Each meta-sample contains n=100,000 iid reported $t$-statistics. This large $n$ was chosen to demonstrate that low rejection rates are not simply inadequate sample size. For the tests in \cite{Elliott} we use their values of the tuning parameters and their code file Tests.R. For the projection test we set $J=30$.  We form $\mathcal{X}$ using a grid of 3000 elements evenly spaced between $\pm 6.5$ plus the scalar $9999$.  We symmetrize about zero and shift by $1.96$ as described in Remarks \ref{rem:symmetrization} and \ref{rem:shifting}. We run 100 bootstrap repetitions and we run 500 simulation repetitions. Projection used \verb|osqp| package in R by \cite{osqp}. Transparency R package available at: \url{https://github.com/sfaridan/When_is_p_hacking_detectable}   }

\end{table}

\section{Empirical Application}\label{sec:application}


This section applies the projection test to the data from \cite{bb}. This data consists of 21,740 two-sided t-statistics drawn from 684 articles.\footnote{Article counts across methods do not sum to 684 because some papers contribute results to more than one design-specific subsample.} These articles were the universe of articles published in 25 top economics journals between 2015 and 2018 that used one of four causal methods. Each $t$-statistic corresponds to a two-sided $t$-test of a main hypothesis of interest. The coefficients in the numerators of the $t$-statistics were estimated using Randomized Controlled Trials (RCT), Difference in Differences (DID), Regression Discontinuity Designs (RDD), or Instrumental Variables (IV). The original authors state that all  regression controls, constant terms, balance and robustness checks, heterogeneity of effects, and placebo tests were excluded. Some tests were reported as $p$-values and \cite{bb} converted these to  $t$-statistics using the standard normal distribution. 

We run our tests on each of the four causal inference methods separately. The results are in Table \ref{tab:brodeur_results}. The main result is that the projection test $p$-values find evidence of selective reporting for RCTs, IVs, and DIDs. We report results for $J=30$ and $J=20$ to demonstrate that our conclusions are not very sensitive to the smoothing parameter. In contrast, the \cite{Elliott} tests reported in the seven rightmost columns generally found no evidence of $p$-hacking in any of these subsamples.\footnote{We run the \cite{Elliott}  tests with all of the test statistics from each article which artificially increases the rejection probability because these tests do not account for clustering. The point stands that these still do not reject the null.}   

The projection test is sensitive. Are the its rejections spurious outcomes coming from (de)rounding or the Student-$t$ approximation? To check this,   we took a random sample of 100 RCT papers, 100 IV papers, and 100 DID papers from \cite{bb}.\footnote{The random sample of 100 was taken to respect RA time and funding constraints.} We calculated their effective sample sizes as the number of clusters where the cluster was the level at which the standard error was clustered. The median effective sample sizes are reported in Table \ref{tab:brodeur_results}. We calculate $\Delta_\nu$ as the average over all the effective sample sizes $\nu_k$ of $\frac{1}{100}\sum_{k=1}^{100}\sup_h||\varphi(t-h) - t_{\nu_k,h}(t)||_T$, which can be computed directly by numerical integration.\footnote{We were unable to determine a lower bound on effective sample size for 20/100 DID and 15/100 IV papers.} For RCTs and IVs, $\Delta_\nu < \widehat{B}$. This means that the effective sample sizes were overall too large  to explain the breakdown statistic and thus the rejection of $\mathbf{H}_0$ stands for those two methods. For DIDs, $\Delta_\nu > \widehat{B}$. This means that it is possible that the projection test could have rejected the null based on low degrees of freedom, so the evidence in favor of $p$-hacking is weaker for DIDs. 

Next we conduct an additional robustness exercise. In Section \ref{sec:de-rounding}, we showed that if the $t$-statistic is observed up to two decimal places, (de)rounding can add at most $\sqrt{\frac{1}{2\pi}} \frac{.005^2}{2} \leq 5\times 10^{-6}$ to the test statistic. Since $\widehat{B} > 5\times 10^{-6}$ for all nonzero breakdown statistics in Table \ref{tab:brodeur_results} we conclude that (de)rounding cannot explain the non-smoothness of the $t$-curves in the \cite{bb} data.\footnote{A technical caveat is that some of the statistics in \cite{bb} where reported as $p$-values and converted to $t$-statistics by \cite{bb} , which could in theory create a small amount of dependence between the rounding error and $T$. Nevertheless we argue that the criterion  $\widehat{B}> \sqrt{\frac{1}{2\pi}} \frac{.005^2}{2}$ is still a very useful heuristic in this case. }

These empirical results add a novel insight to existing knowledge. The $t$-curves for RCTs, IVs, and DIDs are significantly distorted. For RCTs and IVs, these distortions are larger than can be explained by random chance, (de)rounding, or the Student-$t$ approximation.\footnote{A new test in development by \cite{kudrinjmp24} also finds evidence of $p$-hacking in this dataset, but does not (at the time of our writing) include guarantees regarding sharpness or robustness to (de)rounding or the Student-$t$ approximation.} Selective reporting of $t$-statistics is left as a likely culprit. The projection test proposed in this paper is the first (to our knowledge) that is simultaneously sensitive enough to detect these distortions in the \cite{bb} sample and also robust to basic concerns related to (de)rounding, the unknown distribution of true effects, and finite degrees of freedom of the underlying studies. 

These findings are of empirical importance because they support a hypothesis proposed by \cite{simonsohn2020phacking}. In their original paper, \cite{bb} show that while IVs and DIDs appear to have many ``missing" $t$-statistics just below 1.96, RCTs do not. Responding, \cite{simonsohn2020phacking} points out that the data structures of RCTs may incentivize forms of selective reporting that do not leave such gaps---and that the innocent appearance of the RCT $t$-curve might mask some $p$-hacking. The results of the projection test reported in Table \ref{tab:brodeur_results} support Simonsohn's conjecture.

It is possible that the observed distortions of the $t$-curve are also caused by invalidity of many of the underlying tests, e.g. if many researchers use incorrect standard errors. This possibility should be acknowledged, but we propose that its implications for the overall credibility of the underlying tests is perhaps not very different from selective reporting.

\begin{table}[h!]
\begin{center}
 \caption{\label{tab:brodeur_results} Empirical Application Results}
 \resizebox{\textwidth}{!}{%
 \begin{tabular}{lccccccccccccccccc}
 & & & & &\multicolumn{2}{c}{J=30} && \multicolumn{2}{c}{J=20} && \multicolumn{7}{c}{EKW p-values} \\
 \cline{6-7} \cline{9-10} \cline{12-18}
 & n & Articles & Med. $\nu$ & $\Delta_\nu$ & $p$ & $\widehat{B}$ & & $p$ & $\widehat{B}$ & & LCM & Fisher & Disc. & Binomial & CS1 & CS2B & Min \\
 \hline
RCT & 7569 & 145 & 185 & 0.0014 & .00 & .0034 && .00 & .0023 && 1.00 & 1.00 & .59 & .82 & .79 & .79 & .59\\
IV & 5170 & 281 & 262 & 0.0013 & .00 & .0029 && .00 & .0030 && 1.00 & 1.00 & .55 & .62 & .41 & .59 & .41\\
DID & 5853 & 241 & 293.5 & 0.0014 & .02 & .0012 && .03 & .0005 && 1.00 & 1.00 & .64 & .99 & .82 & .31 & .31\\
RDD & 3148 & 85 & $\cdot$ & $\cdot$ & .05 & $.0000$ && .06 & $.0000$ && 1.00 & 1.00 & .72 & .50 & .37 & .54 & .37\\
 \hline
 \end{tabular}%
 }
 \end{center}
    {\footnotesize Results of the projection test for the \cite{bb} data. $p$-values and breakdown statistics of the projection test for $J=30$ and $J=20$ are both included to show low sensitivity to the smoothing parameter. Med $\nu$ is the median effective sample size as measured by the median number of clusters in a random sample of 100 articles for each causal inference method. In ambiguous cases, the smallest cluster count was always used. $\Delta_\nu = \frac{1}{100}\sum_{k=1}^{100}\sup_h||\varphi(t-h) - t_{\nu_k,h}(t)||_T$ or the average difference between the PDFs of the standard normal and Student-$t$ over the sample of 100 effective sample sizes collected (computed numerically). Since the $p$-value for RDDs is insignificant, we did not compute $\Delta_\nu$ for them.  We form $\mathcal{X}$ using a grid of 3000 elements evenly spaced between $\pm 6.5$ plus the scalar $9999$. Discretization incurs error $\mathcal{E}(L,\delta)=0.00063$ which is already accounted for in $\widehat{B}$ and the $p$-value. The discretization was chosen to be made of round numbers and result in error relatively small compared to $\Delta_\nu$.  The bootstrap was run with 1000 repetitions.  Computation used \verb|osqp| package in R by \cite{osqp}. Data: \cite{bb}. Transparency R package available at: \url{https://github.com/sfaridan/When_is_p_hacking_detectable} } 
    
\end{table}

\section{Conclusion}

This paper shows that an innocent visual appearance of the $t$-curve can be misleading. $p$-hacking can be undetectable. Even when it is detectable, it can manifest in the $t$-curve without the usual signatures. We propose the first test (to our knowledge) that is both completely sharp and also robust to basic distortions coming from (de)rounding. There are two takeaway messages for practice: Meta-analysts should consider using our projection test and they should also avoid interpreting a smooth $t$-curve as strong evidence of honesty.

\clearpage
\bibliographystyle{chicago}
\bibliography{biblio}
\appendix

\clearpage
\section{Proofs}\label{sec:proofs}

\subsection{Proof of Theorem \ref{thm:testable_implication}}\label{proof:thm:testable_implication} 

{\noindent \bf   $\implies$ Direction} 

First we prove the $\implies$ direction. If $\mathbf{H}_0$ is true, then $f_{T\mid R} = \int_{-\infty}^\infty \varphi(t-h)d\Pi_0(h)$ for some (possibly non-discrete) probability distribution $\Pi_0$. So $T \mid R$ is continuous and must have a PDF of bounded height. Regardless of $\Pi_0$, we can always construct a sequence of discrete distributions $\Pi_n$ that converge weakly to  $\Pi_0$. Notice that the function $\varphi(t-h)$ is bounded. By the Portmanteau Theorem:
\begin{equation}\label{eq:moment_convergence_weak}
    \lim_{n\to \infty}\mathbb{E}_{\Pi_n}[\varphi(t-h)]  = \mathbb{E}_{\Pi_0}[\varphi(t-h)] \: \forall\: t\in \mathbb{R}
\end{equation}

\noindent So the function $\varphi(t)\left(\mathbb{E}_{\Pi_0}[\varphi(t-h)] - \mathbb{E}_{\Pi_n}[\varphi(t-h)]\right)^2$ converges pointwise to zero:
\begin{equation}
    \lim_{n\to \infty} \varphi(t)\left(\mathbb{E}_{\Pi_0}[\varphi(t-h)] - \mathbb{E}_{\Pi_n}[\varphi(t-h)]\right)^2 = 0
\end{equation}

Since $\left|\varphi(t)\left(\mathbb{E}_{\Pi_0}[\varphi(t-h)] - \mathbb{E}_{\Pi_n}[\varphi(t-h)]\right)^2\right| \leq \varphi(t)$ it is dominated by an integrable function. By the Dominated Convergence Theorem:
 \begin{align*}
 \lim_{n\to \infty}\left|\left|f_{T\mid R}- \int_{-\infty}^\infty \varphi(t-h)d\Pi_n(h) \right|\right|_T^2 &=\lim_{n\to \infty}\int_{-\infty}^\infty \varphi(t)\left(\mathbb{E}_{\Pi_0}[\varphi(t-h)] - \mathbb{E}_{\Pi_n}[\varphi(t-h)]\right)^2dt \\
 &= \int_{-\infty}^\infty \lim_{n\to \infty}\varphi(t)\left(\mathbb{E}_{\Pi_0}[\varphi(t-h)] - \mathbb{E}_{\Pi_n}[\varphi(t-h)]\right)^2dt\\
&= 0
 \end{align*}

 So there is a sequence of $(\Pi_n)_{n=1}^\infty \subseteq \mathcal{D}$  such that $ \left|\left|f_{T\mid R}- \int_{-\infty}^\infty \varphi(t-h)d\Pi_n(h) \right|\right|_T\to 0$. Therefore the infimum over $\mathcal{D}$ must be zero:
 \begin{align*}
       \inf_{\Pi \in \mathcal{D} } \left|\left|f_{T\mid R}- \int_{-\infty}^\infty \varphi(t-h)d\Pi(h) \right|\right|_T =0
 \end{align*}

\vskip 0.1in
 {\noindent\bf   $\impliedby$ Direction}   

By Assumption \ref{assum:cts_TR},  the PDF $f_{T \mid R}$ exists. Suppose that $ \inf_{\Pi \in \mathcal{D} } \left|\left|f_{T\mid R}- \int_{-\infty}^\infty \varphi(t-h)d\Pi(h) \right|\right|_T =0$. Then, there is a sequence $\Pi_n \subset \mathcal{D}$ such that $\left|\left|f_{T\mid R}- \int_{-\infty}^\infty \varphi(t-h)d\Pi_n(h) \right|\right|_T \to 0$. So:
\begin{align*}
    \lim_{n\to \infty}\left|\left|f_{T\mid R}- \int_{-\infty}^\infty \varphi(t-h)d\Pi_n(h) \right|\right|_T^2 &=\lim_{n\to \infty}\int_{-\infty}^\infty \varphi(t)\left(f_{T\mid R}(t) - \mathbb{E}_{\Pi_n}[\varphi(t-h)]\right)^2dt  = 0
\end{align*}

Therefore $\lim_{n\to \infty}\int_{-\infty}^\infty \ell_n(t) dt\to 0$ where $\ell_n(t)\equiv \varphi(t)\left(f_{T\mid R}(t) - \mathbb{E}_{\Pi_n}[\varphi(t-h)]\right)^2$. By Proposition 1.1.9 of \cite{Grafakos2010ClassicalFA}, $\ell_n \to 0$ in measure. By Theorem 1.1.11 of  \cite{Grafakos2010ClassicalFA}, there is a subsequence $k_n$ such that $\ell_{k_n} \to 0$ almost everywhere. Since $\varphi(t)>0$ this implies that $f_{T\mid R}(t) - \mathbb{E}_{\Pi_{k_n}}[\varphi(t-h)] \to 0$ for almost all $t$. In other words, $ \int_{-\infty}^\infty \varphi(t-h)d\Pi_{k_n}(h)\to f_{T\mid R}(t)$ pointwise for almost every $t$. 

Since both are PDFs, $ \int_{-\infty}^\infty\int_{-\infty}^\infty \varphi(t-h)d\Pi_{k_n}(h)dt= \int_{-\infty}^\infty f_{T\mid R}(t)dt=1$. So by Scheffe's Lemma, $$ \int_{-\infty}^{\infty}\left|\mathbb{E}_{\Pi_{k_n}}[\varphi(t-h)]- f_{T\mid R}(t)\right|dt\to 0$$

Let $\phi_{\Pi_{k_n}+Z}$ be the characteristic functions of $\mathbb{E}_{\Pi_{k_n}}[\varphi(t-h)]$ and let $\phi_{T\mid R}$ be the characteristic function of $f_{T\mid R}$. Then, for any $\zeta \in \mathbb{R}$:
$$  \left|\phi_{\Pi_{k_n}+Z}(\zeta) - \phi_{T\mid R}(\zeta)\right| \leq \left|\int_{-\infty}^\infty e^{i\zeta t}\left(\mathbb{E}_{\Pi_{k_n}}[\varphi(t-h)]- f_{T\mid R}(t)\right) dt \right|\leq  \int_{-\infty}^{\infty}\left|\mathbb{E}_{\Pi_{k_n}}[\varphi(t-h)]- f_{T\mid R}(t)\right|dt\to 0 $$

So $\phi_{\Pi_{k_n}+Z}(\zeta) \to \phi_{T\mid R}(\zeta)$ pointwise. By the Convolution Theorem, $\phi_{\Pi_{k_n}+Z}(\zeta)  = \phi_{\Pi_{k_n}}(\zeta)\exp(-\zeta^2/2)$. So:
\begin{align*}
     \phi_{\Pi_{k_n}}(\zeta)\exp(-\zeta^2/2) &\to  \phi_{T\mid R}(\zeta)\\
      \phi_{\Pi_{k_n}}(\zeta) &\to  \phi_{T\mid R}(\zeta)\exp(\zeta^2/2)
\end{align*}

Notice that $\exp(\zeta^2/2)$ is continuous and equals one at $\zeta=0$. Since $ \phi_{\Pi_{k_n}}(\zeta) $ are valid characteristic functions and $\phi_{T\mid R}(\zeta)\exp(\zeta^2/2)$ is continuous at zero, by the Levy Continuity Theorem, $\phi_{T\mid R}(\zeta)\exp(\zeta^2/2)$ must also be a valid characteristic function of a random variable with distribution $\Pi$. So $\phi_{T\mid R}(\zeta) = \phi_{\Pi}(\zeta)\exp(-\zeta^2/2)$. This is a convolution of the distribution $\Pi$ with an independent standard normal. So $\mathbf{H}_0$ is true: $T \mid R =^d H'+Z$ where $H'\independent Z$ and $H'\sim \Pi$ and $Z\sim N(0,1)$. 

\subsection{Proof of Lemma \ref{lem:ynorm_infsum}}\label{proof:lem:ynorm_infsum}


We have assumed that  $f_{T \mid R} \in \mathcal{L}_T$. Notice that $\sup_{t\in\mathbb{R}}|\int_{-\infty}^\infty \varphi(t-h)d\Pi(h)|\leq \sup_{x\in\mathbb{R}}|\varphi(x)|\leq \frac{1}{\sqrt{2\pi}}$. So $\int_{-\infty}^\infty \varphi(t-h)d\Pi(h)$ is a PDF in $\mathcal{L}_T$. Since  $\mathcal{L}_T$ is a Hilbert space their difference is a member of $\mathcal{L}_T$ and the $||\cdot||_T$ norm of the difference is therefore finite. Since $\psi_j$ form an orthonormal basis of $\mathcal{L}_T$, by Parseval's Formula (Theorem 2.8 of \cite{CARRASCO2Handbook}):
\begin{align*}
      \left|\left|f_{T\mid R}- \int_{-\infty}^\infty \varphi(t-h)d\Pi(h) \right|\right|_T &= \sqrt{  \sum_{j=0}^\infty    \left(\left\langle f_{T\mid R}- \int_{-\infty}^\infty \varphi(t-h)d\Pi(h),\psi_j\right\rangle_T\right)^2 } \\
      &= \sqrt{  \sum_{j=0}^\infty    \left(\left\langle f_{T\mid R},\psi_j\right\rangle_T- \left\langle\int_{-\infty}^\infty \varphi(t-h)d\Pi(h),\psi_j\right\rangle_T\right)^2 } 
\end{align*}

Since $\int_{-\infty}^\infty \varphi(t-h)d\Pi(h)$ is the density of $H'+Z$ and integrals over densities are expectations:
\begin{align*}
     &\sqrt{  \sum_{j=0}^\infty    \left(\left\langle f_{T\mid R},\psi_j\right\rangle_T- \left\langle\int_{-\infty}^\infty \varphi(t-h)d\Pi(h),\psi_j\right\rangle_T\right)^2 }\\
      &=  \sqrt{  \sum_{j=0}^\infty    \left(\int_{-\infty}^\infty f_{T\mid R}(t)\psi_j(t)\varphi(t)dt- \int_{-\infty}^\infty \left[\int_{-\infty}^\infty \varphi(t-h)d\Pi(h)\right]\psi_j(t)\varphi(t)dt \right)^2 } \\
     &= \sqrt{  \sum_{j=0}^\infty    \left(\mathbb{E}\left[\psi_j(T)\varphi(T)\mid R \right] - \mathbb{E}\left[\psi_j(H'+Z)\varphi(H'+Z) \right]\right)^2 } \\
     &=\sqrt{  \sum_{j=0}^\infty    \left(\theta_{0,j} -c_j(\Pi)\right)^2 }
\end{align*}

This completes the proof.


\subsection{Proof of Theorem \ref{thm:H0_iff_alldJzero}}\label{proof:thm:H0_iff_alldJzero} 

 Combining Theorem \ref{thm:testable_implication} with Lemma \ref{lem:ynorm_infsum}:

 $$\mathbf{H}_0 \iff\inf_{\Pi\in \mathcal{D}}\sqrt{  \sum_{j=0}^\infty    \left(\theta_{0,j} - c_j(\Pi)\right)^2 }=0 $$
 
 So to prove this theorem, we need only show that:
$$\inf_{\Pi\in \mathcal{D}}\sqrt{  \sum_{j=0}^\infty    \left(\theta_{0,j} - c_j(\Pi)\right)^2 }=0 \iff  d_J(\theta_0)=0 \quad \forall\: {J>0}$$

\vskip 0.1in
{\noindent\bf Step 1: $\implies$ Direction}

The $\implies$ direction follows immediately from the squeezing argument:  
\begin{align*}
    0= \inf_{\Pi\in \mathcal{D}}\sqrt{  \sum_{j=0}^\infty    \left(\theta_{0,j} - c_j(\Pi)\right)^2 }
    \geq  \inf_{\Pi\in \mathcal{D}}\sqrt{  \sum_{j=0}^J    \left(\theta_{0,j} - c_j(\Pi)\right)^2 }
    =  d_J(\theta_0) \geq 0
\end{align*}

\vskip 0.1in
{\noindent\bf Step 2: $\impliedby$ Direction}

For the $\impliedby$ direction, assume that: $ d_J(\theta_0)=0 \quad \forall_{J\in \mathbb{N}}$. So, for each $J\in\mathbb{N}$ we can find a $\Pi_J\in  \mathcal{D}$ such that:
\begin{align*}
    { \sum_{j=0}^J (\theta_{0,j} -c_j(\Pi_J))^2} \leq J^{-4}
\end{align*}

Next we control the whole sum:
\begin{align}
    { \sum_{j=0}^\infty  (\theta_{0,j} -c_j(\Pi_J))^2} &= {\sum_{j=0}^J (\theta_{0,j} -c_j(\Pi_J))^2 + \sum_{j=J+1}^{\infty} (\theta_{0,j} -c_j(\Pi_J))^2}\label{eq:wholesum1}\\
    &\leq  {J^{-4} + \sum_{j=J+1}^{\infty} (\theta_{0,j} -c_j(\Pi_J))^2} \label{eq:wholesum2}
\end{align}

Next we bound the tail. By Lemma \ref{lem:cj_Pi} and then Lemma \ref{lem:bound_coeffs}, for any $H'\sim \Pi$:
\begin{align*}
    \sum_{j=J+1}^{\infty} c_j(\Pi)^2=\sum_{j=J+1}^{\infty} \frac{1}{2^{j}}|\mathbb{E}_{\Pi}\left[\chi_j(H')\varphi_{2}(H')\right] |^2 \leq \sum_{j=J+1}^{\infty}\frac{1}{2^{j}}\sup_{t\in \mathbb{R}}|\chi_j(t)\varphi_2(t)|^2\leq \sum_{j=J+1}^{\infty}\frac{1}{2^{j}4\pi} 
\end{align*}

This upper bound does not depend on $\Pi$, so in particular
$$\sum_{j=J+1}^{\infty} c_j(\Pi_J)^2 \to 0 $$

Moreover,  $\sum_{j=1}^\infty \theta_{0,j}^2 < \infty$ by Assumption \ref{assum:cts_TR} and Parseval's Formula. Therefore $\lim_{J\to \infty}\sum_{j=J+1}^\infty \theta_{0,j}^2 = 0$. Using the inequality $(a-b)^2\leq 2a^2+2b^2$ we have: $$\lim_{J\to \infty}\sum_{j=J+1}^{\infty} (\theta_{0,j} -c_j(\Pi_J))^2 \leq \lim_{J\to \infty}\sum_{j=J+1}^{\infty} (2c_j(\Pi_J)^2+2\theta_{0,j}^2)= 0 $$

Combining the previous bounds with \eqref{eq:wholesum2}, we obtain
\begin{align*}
    \lim_{J\to \infty}  \sqrt{ \sum_{j=0}^{\infty} (\theta_{0,j} -c_j(\Pi_J))^2}  =0
\end{align*}

Since each $\Pi_J \in  \mathcal{D}$:
$$  \inf_{\Pi \in \mathcal{D}} \sqrt{ \sum_{j=0}^\infty (c_j(\Pi)-\theta_{0,j})^2}=0$$

This completes the $\impliedby$ direction.

\subsection{Proof of Lemma \ref{lem:max_grid_approx_error}}\label{proof:lem:max_grid_approx_error} 

Let $\Pi_y$ denote the point mass at $y\in\mathbb R$, and define
\begin{align*}
\mathbf u_y \equiv \big(c_0(\Pi_y),\dots,c_J(\Pi_y)\big)\in\mathbb R^{J+1}.
\end{align*}
Recall that
\begin{align*}
d_J(\mathbf v)=\inf_{\Pi\in\mathcal D}\|\mathbf v-\mathbf c(\Pi)\|_2,
\qquad
\widetilde d_J(\mathbf v)=\inf_{\widetilde\Pi\in{\mathcal D}_{\mathcal X}}\|\mathbf v-\mathbf c(\widetilde\Pi)\|_2,
\end{align*}
where ${\mathcal D}_{\mathcal X}$ is the set of discrete distributions with support contained in $\mathcal X$.

Define the convex sets of vectors:
\begin{align*}
U \equiv \operatorname{conv}\{\mathbf u_y:y\in\mathbb R\},
\qquad
\widetilde U \equiv \operatorname{conv}\{\mathbf u_x:x\in\mathcal X\}.
\end{align*}
Since every discrete distribution is a convex combination of point masses, we have
\begin{align*}
d_J(\mathbf v)=\operatorname{dist}(\mathbf v,U),
\qquad
\widetilde d_J(\mathbf v)=\operatorname{dist}(\mathbf v,\widetilde U).
\end{align*}
Because $\widetilde U\subseteq U$,
\begin{align*}
0\leq \widetilde d_J(\mathbf v)-d_J(\mathbf v) \leq \sup_{u\in U}\inf_{\widetilde u\in\widetilde U}\|u-\widetilde u\|_2
\end{align*}
We now bound this difference uniformly.

Fix any $u\in U$. Then $u$ can be written as
\begin{align*}
u=\sum_{k=1}^m \alpha_k \mathbf u_{y_k}
\end{align*}
for some $m<\infty$, $\alpha_k\ge 0$, $\sum_{k=1}^m \alpha_k=1$, and $y_k\in\mathbb R$. For each $y_k$, let $x(y_k)\in\mathcal X$ denote a closest grid point. Define
\begin{align*}
\widetilde u \equiv \sum_{k=1}^m \alpha_k \mathbf u_{x(y_k)}\in \widetilde U.
\end{align*}
Then by the triangle inequality,
\begin{align*}
\|u-\widetilde u\|_2
&=
\left\|\sum_{k=1}^m \alpha_k\big(\mathbf u_{y_k}-\mathbf u_{x(y_k)}\big)\right\|_2 \\
&\le
\sum_{k=1}^m \alpha_k\|\mathbf u_{y_k}-\mathbf u_{x(y_k)}\|_2 \\
&\le
\sup_{y\in\mathbb R}\min_{x\in\mathcal X}\|\mathbf u_y-\mathbf u_x\|_2.
\end{align*}
Since this holds for every $u\in U$, it follows that
\begin{align*}
\sup_{u\in U}\inf_{\widetilde u\in\widetilde U}\|u-\widetilde u\|_2
\le
\sup_{y\in\mathbb R}\min_{x\in\mathcal X}\|\mathbf u_y-\mathbf u_x\|_2.
\end{align*}
Therefore,
\begin{align*}
0\leq \widetilde d_J(\mathbf v)-d_J(\mathbf v)
&\leq
\sup_{u\in U}\inf_{\widetilde u\in\widetilde U}\|u-\widetilde u\|_2 \\
&\leq
\sup_{y\in\mathbb R}\min_{x\in\mathcal X}\|\mathbf u_y-\mathbf u_x\|_2.
\end{align*}

Using Parseval's Formula:
\begin{align*}
     || \mathbf{u}_x-\mathbf{u}_y ||_2^2 &=  \sum_{j=0}^J\left(c_j(\Pi_x)-c_j(\Pi_y)\right)^2\leq  \sum_{j=0}^{\infty}\left(c_j(\Pi_x)-c_j(\Pi_y)\right)^2\\
     &=||\varphi(t-x)-\varphi(t-y)||_T^2=\int_{-\infty}^\infty \varphi(t)\left(\varphi(t-x)-\varphi(t-y)\right)^2dt
\end{align*}

Now we split into two cases:
\begin{align*}
    &\sup_{y\in\mathbb{R}} \min_{x\in\mathcal{X}}\int_{-\infty}^\infty \varphi(t)\left(\varphi(t-x)-\varphi(t-y)\right)^2dt \\
    &\leq \max\left\{\sup_{y\notin[-L,L]} \min_{x\in\mathcal{X}}\int_{-\infty}^\infty \varphi(t)\left(\varphi(t-x)-\varphi(t-y)\right)^2dt,\sup_{y\in[-L,L]} \min_{x\in\mathcal{X}}\int_{-\infty}^\infty \varphi(t)\left(\varphi(t-x)-\varphi(t-y)\right)^2dt \right\}
\end{align*}

Now we upper-bound both maxima by limiting the choice of $x$. Let $x(y)$ be any minimizer of $\argmin_{x\in \mathcal{X}} |x-y|$:
\begin{align*}
\sup_{y\in [-L,L]}\min_{x\in\mathcal{X}}{\int_{-\infty}^\infty \varphi(t)\left(\varphi(t-x)-\varphi(t-y)\right)^2dt } &\leq\sup_{y\in [-L,L]}{\int_{-\infty}^\infty \varphi(t)\left(\varphi(t-y)-\varphi(t-x(y))\right)^2dt } \\
\sup_{y\notin[-L,L]} \min_{x\in\mathcal{X}}\int_{-\infty}^\infty \varphi(t)\left(\varphi(t-x)-\varphi(t-y)\right)^2dt &\leq \sup_{y>L} \int_{-\infty}^\infty \varphi(t)\left(\varphi(t-L)-\varphi(t-y)\right)^2dt
\end{align*}

The last step is to upper-bound the second supremum over $y>L$. (Symmetry handles the $y<-L$ case).

We claim that for any $L>0$ and any $y>L$,
\begin{align*}
\int_{-\infty}^\infty \varphi(t)\big(\varphi(t-L)-\varphi(t-y)\big)^2dt
\le
\int_{-\infty}^\infty \varphi(t)\varphi(t-L)^2dt.
\end{align*}
To see this, define
\begin{align*}
A(a)\equiv \int_{-\infty}^\infty \varphi(t)\varphi(t-a)^2dt,
\qquad
B(a,b)\equiv \int_{-\infty}^\infty \varphi(t)\varphi(t-a)\varphi(t-b)\,dt.
\end{align*}
Then
\begin{align*}
\int_{-\infty}^\infty \varphi(t)\big(\varphi(t-L)-\varphi(t-y)\big)^2dt
=
A(L)+A(y)-2B(L,y).
\end{align*}
A direct calculation gives
\begin{align*}
A(a)=\frac{1}{2\pi\sqrt{3}}e^{-a^2/3},
\qquad
B(a,b)=\frac{1}{2\pi\sqrt{3}}e^{-(a^2-ab+b^2)/3}.
\end{align*}



Therefore
\begin{align*}
A(y)\leq 2B(L,y)
\iff
e^{-y^2/3}\leq 2e^{-(L^2-Ly+y^2)/3}
\iff
y\ge L-\frac{3\log 2}{L}.
\end{align*}
Since $L>0$ and $y>L$, this condition is automatically satisfied. Hence
\begin{align*}
A(L)+A(y)-2B(L,y)\leq A(L),
\end{align*}
which proves the claim. Combining this with the previous results we have the claim of the lemma:
$$0\leq \widetilde d_J(\mathbf v)-d_J(\mathbf v) \leq \max\left\{\sup_{y\in [-L,L]}\sqrt{\int_{-\infty}^\infty \varphi(t)\left(\varphi(t-y)-\varphi(t-x(y))\right)^2dt },\sqrt{\int_{-\infty}^\infty \varphi(t)\varphi(t-L)^2 dt} \right\} $$

Taking the sum (instead of the max) makes the inequality strict since both quantities are nonzero.

\subsection{Proof of Lemma \ref{lem:cj_Pi}}\label{proof:lem:cj_Pi} 

This proof is based on Example 1 of \cite{CarrascoPaper} and was repurposed by \cite{faridani2025testingunderpoweredliteratures}. It is not new and included here only for completeness. The proof proceeds in two steps. First we derive an important property of the adjoint of the convolution operator. Second, we use Portmanteau to show the result.  

{\bf\noindent Step 1: The Adjoint}

Define $\varphi_2$ as the PDF of the Gaussian with variance 2. We will need to define the following inner product $  \langle \cdot , \cdot \rangle_{H} $ and its Hilbert Space $ \mathcal{L}_{H}$:
\begin{align}
     \langle \phi_1, \phi_2 \rangle_{H} &\equiv  \int_{-\infty}^\infty \phi_1(x)\phi_2(x)\varphi_2(x)dx\\
     ||\phi||_H &\equiv  \sqrt{\langle \phi, \phi \rangle_{H}}\\
    \mathcal{L}_{H} &\equiv \left\{\phi(x)\text{ such that } ||\phi||_H < \infty \right\} 
\end{align}

Next define the convolution operator $K \: : \: \mathcal{L}_{H} \to \mathcal{L}_{T} $ which convolves a function $g(h) \in \mathcal{L}_{H}$ with the standard Gaussian PDF:
\begin{align}
    (Kg)[t] &\equiv \int_{-\infty}^\infty g(h)\varphi(t-h)dh
\end{align}

We have defined these two spaces $\mathcal{L}_T,\mathcal{L}_H$ so that $K$ is Hilbert-Schmidt. By Example 1 of \cite{CarrascoPaper}, the Singular Value Decomposition of $K$ is:
\begin{align}\label{eq:SVD}
     Kg = \sum_{j=0}^\infty \frac{1}{2^{j/2}}\langle g,\chi_j \rangle_H \psi_j
\end{align}

By the orthonormality of the Hermite polynomials and Equation (\ref{eq:SVD}):
\begin{align}
   \langle K g, \psi_j \rangle_T &= \frac{1}{2^{j/2}}\langle g,\chi_j \rangle_H ,\qquad \forall g \in \mathcal{L}_H
\end{align}

Define $K^*$ as the adjoint of $K$ with respect to the inner products $  \langle \cdot , \cdot \rangle_{H},  \langle \cdot , \cdot \rangle_{T}  $. Therefore:
\begin{align}
   \langle  g, K^*\psi_j \rangle_H &= \langle K g, \psi_j \rangle_T = \left\langle g, \frac{1}{2^{j/2}}\chi_j \right\rangle_H
\end{align}

Since this is true for all $g \in \mathcal{L}_H$ and by the uniqueness of the Riesz representor:
\begin{align}\label{eq:adjoint}
    K^*\psi_j  &= \frac{1}{2^{j/2}}\chi_j
\end{align}

{\bf\noindent Step 2: Portmanteau Argument}

 We will start by proving the result for smooth $\Pi$  and then extend to all probability distributions via weak convergence. For any probability distribution $\Pi$ we can define a sequence of continuous distributions $H_n\sim\Pi_n\equiv \Pi * N\left(0,\tau_n^2\right)$  where $\tau_n\to 0$ and $*$ denotes convolution. So $\Pi_n$ has PDF $\pi_n$ each of finite height and $\Pi_n\to_w \Pi$ (where $\to_w$ denotes weak convergence). Since each $\pi_n$ has finite height, we have $\pi_n\in\mathcal L_H$. 
 \begin{align*}
   c_j(\Pi_n)&=  \mathbb{E}_{\Pi_n}\left[ \varphi(H_n+Z)\psi_j(H_n+Z) \right] &\text{def. of $c_j$}\\
   &= \int_{-\infty}^\infty \varphi(t)\psi_j(t)[K\pi_n](t)dt & \text{$K\pi_n$ is PDF of $H_n+Z$}\\
   &= \langle K\pi_n,\psi_j \rangle_T &\text{def. of $\langle \cdot, \cdot \rangle_T$}\\
     &= \langle \pi_n,K^*\psi_j \rangle_H  &\text{def. of the Adjoint}\\
     &= \frac{1}{2^{j/2}} \langle \pi_n, \chi_j\rangle_H  &\text{Equation (\ref{eq:adjoint})}\\
     &=  \frac{1}{2^{j/2}} \int_{-\infty}^\infty \varphi_2(h)\chi_j(h)\pi_n(h)dh & \text{def. of $\langle \cdot, \cdot \rangle_H$}\\
     &=   \frac{1}{2^{j/2}} \mathbb{E}_{\Pi_n}\left[ \varphi_2(H)\chi_j(H) \right]  &\text{$\pi_n$ is PDF of $\Pi_n$}
 \end{align*}

By Lemma \ref{lem:bound_coeffs}, the functions $ \varphi(t)\psi_j(t)$ and $ \varphi_2(h)\chi_j(h)$ are uniformly bounded. They are continuous since they are the products of the Gaussian PDF and polynomials. So by Portmanteau:
\begin{align*}
     \mathbb{E}_{\Pi_n}\left[ \varphi(H_n+Z)\psi_j(H_n+Z) \right] &\to  \mathbb{E}_{\Pi}\left[ \varphi(H+Z)\psi_j(H+Z) \right]\\
     \mathbb{E}_{\Pi_n}\left[ \varphi_2(H_n)\chi_j(H_n) \right] &\to \mathbb{E}_{\Pi}\left[ \varphi_2(H)\chi_j(H) \right]
\end{align*}

Since the sequences are equal, their limits must also be equal. So for all probability distributions $\Pi$:
\begin{align*}
    c_j(\Pi) =  \mathbb{E}_{\Pi}\left[ \varphi(H+Z)\psi_j(H+Z) \right]=   \frac{1}{2^{j/2}}\mathbb{E}_{\Pi}\left[ \varphi_2(H)\chi_j(H) \right]
\end{align*}

\subsection{Proof of Lemma \ref{lem:normality_thetahat}}\label{proof:lem:normality_thetahat}

To show that the cluster-bootstrap is valid, we will rewrite $\widehat{\theta}_n$ as a smooth function of $m$ iid article-level observations and then apply the multivariate CLT, the delta method, and their bootstrap analogues. For article $g$ define $N_g\leq c_0$ as the number of $t$-statistics reported by article $g$ and define $S_g$ as:
$$  S_g \equiv  \left(\sum_{i\in g} \psi_0(t_i)\varphi(t_i), \cdots, \sum_{i\in g} \psi_J(t_i)\varphi(t_i)\right)  $$

By Assumption \ref{assum:articles}, $(S_g,N_g)$ is iid. Its elements are uniformly bounded by Lemma \ref{lem:bound_coeffs} and the fact that Assumption \ref{assum:articles} guarantees $N_g \leq c_0$. Hence, $(S_g,N_g)$ has bounded elements. Since $J$ is fixed, by the multivariate central limit theorem:
\begin{align}
    \sqrt{m}\left(\frac{1}{m}\sum_{g=1}^m (S_g,N_g) - (\mathbb{E}[S_g],\mathbb{E}[N_g])\right) \to_d N(\mathbf{0},\Sigma)
\end{align}

 So we can write $\widehat{\theta}_n$ as the smooth function:
\begin{align*}
    \overline{S}_m \equiv \frac{1}{m}\sum_{g=1}^mS_g,\qquad \overline{N}_m \equiv \frac{1}{m}\sum_{g=1}^m N_g,\qquad g(a,b) \equiv \frac{a}{b}, \qquad \widehat{\theta}_n =g(\overline{S}_m,\overline{N}_m)
\end{align*}

Notice that $\theta_0 = \frac{\mathbb{E}[S_g]}{\mathbb{E}[N_g]}$. Since $\mathbb{E}[N_g]>0$, the function $g$ is continuously differentiable within a neighborhood of $(\mathbb{E}[S_g],\mathbb{E}[N_g])$. So we can use the delta method (and Slutsky exploiting the fact that $n/m  = \overline{N}_m\to_p \mathbb{E}[N_g]$) to conclude that:
\begin{align*}
  \sqrt{n}(  \widehat{\theta}_n -\theta_0) =  \frac{\sqrt{n}}{\sqrt{m}}\sqrt{m}(  \widehat{\theta}_n -\theta_0)\to_d N(\mathbf{0},\Omega)
\end{align*}

Define $\overline S_m^*=\frac1m\sum_{g=1}^m S_g^*$ and $\overline N_m^*=\frac1m\sum_{g=1}^m N_g^*$. Then $\widehat\theta_n^*=g(\overline S_m^*,\overline N_m^*)$. By Theorem 23.4 of \cite{Vaart_1998} the  bootstrap sample mean is asymptotically normal conditional on $\{(S_g,N_g)\}_{g=1}^m$:
$$ \sqrt{m}\left(\frac{1}{{m}}\sum_{g=1}^m (S_g^*,N_g^*) - (\overline{S}_m,\overline{N}_m)\right) \to_d^* N(0,\Sigma) $$

 Once again by the delta method (Theorem 23.5 in \cite{Vaart_1998})  and Slutsky, using $n/m=\overline N_m\to_p \mathbb E[N_g]$, conditionally on $\{(S_g,N_g)\}_{g=1}^m$:
\begin{align*}
  \sqrt{n}( \widehat{\theta}_n^*-  \widehat{\theta}_n )= \frac{\sqrt{n}}{\sqrt{m}}\sqrt{m}( \widehat{\theta}_n^*-  \widehat{\theta}_n )\to_d^* N(\mathbf{0},\Omega)
\end{align*}

\subsection{Proof of Theorem \ref{thm:validity_and_consistency}} \label{proof:thm:validity_and_consistency} 

\vskip 0.1in
{\bf\noindent Proof of Claim 1 (Test Validity)}

By Theorem \ref{thm:H0_iff_alldJzero}, under $\mathbf{H}_0$, we have $d_J(\theta_0)=0$. There are now two cases, “Null with exact grid representation” and “Null with grid approximation error”. That is, the cases are: $\theta_0 \in \Lambda$ and $\theta_0 \notin \Lambda$. (The second case is possible under the null because we are projecting onto a finite grid of $\Pi$ with $\widetilde{d}$ instead of $d$.)

{\bf Case 1:} First consider the case where $\theta_0 \in \Lambda$. Since $\widehat{\theta}_n$ is asymptotically normal by Lemma \ref{lem:normality_thetahat}, $J$ is fixed, and $\Lambda \subset \mathbb{R}^{J+1}$ is closed and convex, the conditions of Fang–Santos Proposition 4.1 and Theorem 3.2 apply. Specifically, by Proposition 4.1 from \cite{FangSantos}:
\begin{align*}
      \theta_0 \in \Lambda \implies  \sqrt{n}\widetilde{d}_J(\widehat{\theta}_n)\to_d \phi'_{\theta_0}(\mathbb{G}_0)
\end{align*}

The numerical estimator for the tangent cone $\widehat{\phi}'_{n,s_n}(h)$ is discussed in Equation (25) of \cite{FangSantos} where they show that it satisfies the conditions of their Theorem 3.2 and therefore:
\begin{align}
     \widehat{\phi}'_{n,s_n}(\sqrt{n}(\widehat{\theta}_n^*-\widehat{\theta}_n)) | \{t_i\}_{i=1}^n \to_d^* \phi'_{\theta_0}(\mathbb{G}_0)
\end{align}

Thus, if  $F_n^{*,-1}(1-\alpha)$ satisfies $\mathbb{P}\left[\widehat{\phi}'_{n,s_n}(\sqrt{n}(\widehat{\theta}_n^*-\widehat{\theta}_n))>F_n^{*,-1}(1-\alpha) | \{t_i\}_{i=1}^n\right]=\alpha$, then since $\text{cv}(\alpha) > F_n^{*,-1}(1-\alpha)$:
\begin{align}
  \theta_0 \in \Lambda \implies    \limsup_{n\to \infty}\mathbb{P}\left[\sqrt{n}\widetilde{d}_J(\widehat{\theta}_n ) > \text{cv}(\alpha)\right]\leq \alpha
\end{align}

{\bf Case 2:} Now consider the case where $\theta_0 \notin \Lambda$. In this case the function is fully differentiable.  Since $\Lambda$ is closed and convex, Example 6.3.3 on page 205 of \cite{hiriart2004fundamentals} provides the gradient: 
\begin{align}\label{eq:gradient}
    \nabla \widetilde{d}_J(\theta) = \frac{\theta - \mathbf{P}_{\Lambda}\theta }{||\theta - \mathbf{P}_{\Lambda}\theta ||} \qquad \forall\: \theta \notin \Lambda
\end{align}
\noindent where $\mathbf{P}_{\Lambda}\theta$ is the (unique) closest member of $\Lambda$ to $\theta$. This is continuous. Full  differentiability is sufficient to invoke the usual delta method. So by Theorem 23.5 in \cite{Vaart_1998}:
\begin{align}\label{eq:asympt_theta0_notinlambda}
   \theta_0 \notin \Lambda \implies    \sqrt{n}\left(\widetilde{d}_J(\widehat{\theta}_n)-\widetilde{d}_J({\theta}_0) \right) &\to_d  \nabla \widetilde{d}_J(\theta_0)\mathbb{G}_0=\mathcal{O}_p(1)
\end{align}

Therefore:
\begin{align*}
  \mathbb{P}\left[ \sqrt{n}\widetilde{d}_J(\widehat{\theta}_n)> cv(\alpha)\right] &\leq \mathbb{P}\left[ \sqrt{n}\widetilde{d}_J(\widehat{\theta}_n)> \sqrt{n}\mathcal{E}(L,\delta)\right] \\
  &=  \mathbb{P}\left[ \sqrt{n}(\widetilde{d}_J(\widehat{\theta}_n)-\widetilde{d}_J(\theta_0))> \sqrt{n}(\mathcal{E}(L,\delta)-\widetilde{d}_J(\theta_0))\right]
\end{align*}

By Lemma \ref{lem:max_grid_approx_error}, $0\leq \widetilde{d}_J(\theta_0) < \mathcal{E}(L,\delta)$. So, $\sqrt{n}(\mathcal{E}(L,\delta)-\widetilde{d}_J(\theta_0)) \to \infty$. Thus:
\begin{align}
      \theta_0 \notin \Lambda \implies  \mathbb{P}\left[ \sqrt{n}\widetilde{d}_J(\widehat{\theta}_n)> cv(\alpha)\right] \to 0
\end{align}

So the test is valid in both cases. 

\vskip 0.1in
{\bf\noindent Proof of Claim 2 (Test Consistency)}

By Theorem \ref{thm:H0_iff_alldJzero} if $\mathbf{H}_0$ is false, then under a fixed alternative for large enough $J$, $d_J(\theta_0)>0$ and therefore $\theta_0 \notin \Lambda$.  By Equation 3.1.6 on page 48 of \cite{hiriart2004fundamentals}, projection onto a closed convex set is non-expansive and therefore the estimation error of the projection residual does not explode:
\begin{align}
     \sqrt{n}\left|\widetilde{d}_J(\widehat{\theta}_n)-\widetilde{d}_J({\theta}_0) \right|\leq \sqrt{n}||\widehat{\theta}_n-\theta_0|| =\mathcal{O}_p(1)
\end{align}

By Theorem \ref{thm:H0_iff_alldJzero}  we have $d_J(\theta_0) >0$ for large enough $J$. By Lemma \ref{lem:max_grid_approx_error}, for large enough $L,\delta^{-1}$ we have $\mathcal{E}(L,\delta) < \frac{1}{2}d_J(\theta_0)$. Therefore $\widetilde{d}_J(\theta_0)-\mathcal{E}(L,\delta) > 0$. Therefore:
\begin{align}\label{eq:dtilde_minus_E}
      \sqrt{n}\left(\widetilde{d}_J(\widehat{\theta}_n)-\mathcal{E}(L,\delta) \right)= \sqrt{n}\left(\widetilde{d}_J(\widehat{\theta}_n)-\widetilde{d}_J({\theta}_0) \right)+\sqrt{n}(\widetilde{d}_J(\theta_0)-\mathcal{E}(L,\delta)) \to_p \infty
\end{align}

Next we verify that the bootstrapped critical values do not diverge. Once again, using the nonexpansiveness of projection onto a closed convex set:
\begin{align*} 
|\widehat{\phi}'_{n,s_n}(h)| &= \frac{1}{s_n}\left|\widetilde{d}_J(\widehat{\theta}_n +s_n h)-\widetilde{d}_J(\widehat{\theta}_n)\right|
\leq \frac{1}{s_n}\left|\left|\widehat{\theta}_n +s_n h-\widehat{\theta}_n \right|\right|
= ||h||
\end{align*}
\noindent So $|\widehat{\phi}'_{n,s_n}(\sqrt{n}(\widehat{\theta}_n^*-\widehat{\theta}_n))| \leq ||\sqrt{n}(\widehat{\theta}_n^*-\widehat{\theta}_n)|| =\mathcal{O}_p(1)$ and $F_n^{*,-1}(1-\alpha) =\mathcal{O}_p(1)$.

Now we can put these results together to conclude that the test is consistent against any fixed alternative for large enough tuning parameters: 
\begin{align}
    \mathbb{P}\left[  \sqrt{n}\widetilde{d}_J(\widehat{\theta}_n) > cv(\alpha)\right] &=    \mathbb{P}\left[  \sqrt{n}\widetilde{d}_J(\widehat{\theta}_n) > F_n^{*,-1}(1-\alpha)+\sqrt{n}\mathcal{E}(L,\delta)\right]\\
    &= \mathbb{P}\left[  \sqrt{n}(\widetilde{d}_J(\widehat{\theta}_n) -\mathcal{E}(L,\delta) ) > F_n^{*,-1}(1-\alpha)\right]
\end{align}

Since Equation (\ref{eq:dtilde_minus_E}) already showed that the left hand side diverges to positive infinity, but we already showed that the right hand side is bounded in probability:
\begin{align}
    \mathbb{P}\left[  \sqrt{n}\widetilde{d}_J(\widehat{\theta}_n) > cv(\alpha)\right]  \to 1
\end{align}

\subsection{Proof of Proposition \ref{prop:de-rounding}}\label{proof:prop:de-rounding}

First notice that $\sup_{h\in\mathbb{R}}\left|\left|g(t-h) - \varphi(t-h)\right|\right|_T \leq \sup_{h\in\mathbb{R}}\left|\left|g(t-h) - \varphi(t-h)\right|\right|_\infty$ because $\int_{-\infty}^\infty |\varphi(t)|dt=1$. The infinity norm is easier to bound so we will bound it. 

    Since $V$ is independent of $Z$ we can derive the PDF of the noise using the convolution theorem:
\begin{align*}
   f_{Z+V}(t) &= \int_{-\infty}^\infty \varphi(t-v)f_V(v)dv\\
  \sup_{t\in\mathbb{R}} |f_{Z+V}(t)-\varphi(t)|&=  \sup_{t\in\mathbb{R}}\left|\int_{-\infty}^\infty (\varphi(t-v)-\varphi(t))f_V(v)dv\right|
\end{align*}

Taking a two-term Taylor expansion about $v=0$. There is some $\theta(t,v)$ such that:
\begin{align*}
    \varphi(t-v) &= \varphi(t)-v\varphi'(t)+\frac{v^2}{2}\varphi''(\theta(t,v))\\
      \varphi(t-v)- \varphi(t)+v\varphi'(t) &= \frac{v^2}{2}\varphi''(\theta(t,v))\\
       \int_{-\infty}^\infty ( \varphi(t-v)- \varphi(t))f_V(v)dv+\mathbb{E}[V]\varphi'(t)&=  \int_{-\infty}^\infty\frac{v^2}{2}f_V(v)\varphi''(\theta(t,v))dv\\
       \left|\int_{-\infty}^\infty ( \varphi(t-v)- \varphi(t))f_V(v)dv\right|&\leq \left(\sup_{z\in \mathbb{R}}|\varphi''(z)|\right)\frac{\mathbb{E}[V^2]}{2}\\
      \sup_{t\in\mathbb{R}}\left|\int_{-\infty}^\infty ( \varphi(t-v)- \varphi(t))f_V(v)dv\right|&\leq \frac{\mathbb{E}[V^2]}{2}\left(\sup_{z\in \mathbb{R}}|\varphi''(z)|\right)= \sup_{z\in \mathbb{R}}e^{-z^2/2}|z^2- 1|\sqrt{\frac{1}{2\pi}} \frac{\mathbb{E}[V^2]}{2}\\
      &\leq \sqrt{\frac{1}{2\pi}} \frac{\mathbb{E}[V^2]}{2}
\end{align*}

\subsection{Proof of Lemma \ref{lem:bound_coeffs}}\label{proof:lem:bound_coeffs} 

 First we convert from the probabalist's Hermite polynomials $He_j(t)$ to  physicists'  version  $H_j(t)$. Recall that $He_j(t) = 2^{-j/2} H_j(t/\sqrt{2})$,
so $\psi_j(t) = \frac{1}{\sqrt{j!}}He_j(t)
= \frac{2^{-j/2}}{\sqrt{j!}}H_j(t/\sqrt{2})$. Hence:
  \begin{align*}
 \sup_{t\in\mathbb{R}}\left|\psi_j(t)\varphi(t)\right| &=\sup_{t\in\mathbb{R}}\left|\psi_j(t)e^{-t^2/2}/(\sqrt{2\pi})\right|\\
     &=\sup_{t\in\mathbb{R}} \left|H_j(t/\sqrt{2})2^{-j/2}e^{-t^2/2}/\sqrt{2\pi j!}\right|
\end{align*}

\noindent  Equation (2) of \cite{HermiteBound} says that the physicist's Hermite polynomials are bounded by:
    $|H_j(t)| \leq (2^j j!)^{1/2}e^{t^2/2}$. Therefore:
\begin{align*}
    \left|H_j(t/\sqrt{2})\right|2^{-j/2}e^{-t^2/2}/\sqrt{2\pi j!} &\leq (2^j j!)^{1/2}e^{t^2/4}2^{-j/2}e^{-t^2/2}/\sqrt{2\pi j!}=e^{-t^2/4}/\sqrt{2\pi}\leq \frac{1}{\sqrt{2\pi}}
\end{align*}

It is also immediate that $\sup_{t\in\mathbb{R}}\left|\chi_j(t)\varphi_{2}(t)\right| \leq \frac{1}{\sqrt{4\pi}} $. To see this, recall that  $\varphi_2(t)=\varphi(t/\sqrt{2})/\sqrt{2}$ and $\chi_j(t)=\psi_j(t/\sqrt{2})$ and therefore: $\chi_j(t)\varphi_2(t) =\psi_j(t/\sqrt{2})\varphi(t/\sqrt{2})/\sqrt{2}$. Now we apply the first claim:  $\sup_{t\in\mathbb{R}}\left|\chi_j(t)\varphi_{2}(t)\right| \leq\sup_{t\in\mathbb{R}}|\psi_j(t/\sqrt{2})\varphi(t/\sqrt{2})/\sqrt{2}|\leq \frac{1}{\sqrt{2}}\frac{1}{\sqrt{2\pi}}$.

\end{document}